\newtheorem*{theorem*}{Theorem}
\newtheorem{theorem}{Theorem}[section]
\newtheorem{proposition}[theorem]{Proposition}
\newtheorem{lemma}[theorem]{Lemma}
\newtheorem{claim}[theorem]{Claim}
\theoremstyle{remark}
\newtheorem{remark}[theorem]{Remark}
\theoremstyle{definition}
\newcommand{\beq}{\begin{eqnarray}}
\newcommand{\eeq}{\end{eqnarray}}
\newcommand{\ket}[1]{|#1\rangle}
\newcommand{\bra}[1]{\langle#1|}
\newcommand{\proj}[1]{\ket{#1}\!\bra{#1}}
\newcommand{\Tr}{\mbox{\rm Tr}}
\newcommand{\Id}{\ensuremath{\mathop{\rm Id}\nolimits}}
\newcommand{\Es}[1]{\ensuremath{\mathop{\textsc{E}}_{#1}}}
\newcommand{\reg}[1]{{\textsf{#1}}}
\newcommand{\ol}[1]{\overline{#1}}
\newcommand{\C}{\ensuremath{\mathbb{C}}}
\newcommand{\F}{\ensuremath{\mathbb{F}}}
\newcommand{\mA}{\ensuremath{\mathcal{A}}}
\newcommand{\mB}{\ensuremath{\mathcal{B}}}
\newcommand{\mH}{\mathcal{H}}
\newcommand{\supp}{\textsc{Supp}}
\newcommand{\norm}[1]{\left\| {#1} \right\|}
\def\*#1{\mathbf{#1}}
\DeclarePairedDelimiter\parens{\lparen}{\rparen}
\newcommand{\kb}[2]{\ket{#1}\!\bra{#2}}
\begin{document}

\title{A monogamy-of-entanglement game for subspace coset states}

\author{Eric Culf}
\affiliation{Department of Mathematics and Statistics, University of Ottawa, Canada}
\email{eculf019@uottawa.ca}

\author{Thomas Vidick}
\affiliation{Department of Computing and Mathematical Sciences, California Institute of Technology, USA}
\email{vidick@caltech.edu}
\maketitle

\begin{abstract}
    We establish a strong monogamy-of-entanglement property for subspace coset states, which are uniform superpositions of vectors in a linear subspace of $\F_2^n$ to which has been applied a quantum one-time pad. This property was conjectured recently by [Coladangelo, Liu, Liu, and Zhandry, Crypto'21] and shown to have applications to unclonable decryption and copy-protection of pseudorandom functions. We present two proofs, one which directly follows the method of the original paper and the other which uses an observation from [Vidick and Zhang, Eurocrypt'20] to reduce the analysis to a simpler monogamy game based on BB'84 states. Both proofs ultimately rely on the same proof technique, introduced in [Tomamichel, Fehr, Kaniewski and Wehner, New Journal of Physics '13]. 
\end{abstract}

\section{Introduction}

Informally, a \emph{monogamy game} is a game in which the maximum success probability is tied to the monogamy of entanglement, i.e.\ limitations on the strength of quantum multipartite correlations. The simplest such game goes as follows. Two players Bob and Charlie aim to prepare a tripartite state $\rho_{\reg{ABC}}$, such that $\reg{A}$ is a single qubit and $\reg{B}$ and $\reg{C}$ are arbitrary, and the following holds: given a measurement of $\reg{A}$ in the standard or Hadamard basis yielding an outcome $x\in\{0,1\}$ it is possible to predict $x$ both by making a measurement on $\reg{B}$ only \emph{and} on $\reg{C}$ only, given the chosen basis as side information. Monogamy of entanglement expresses itself by the fact that while ignoring $\reg{C}$ it is possible to win in this game with probability $1$ by choosing $\rho_{\reg{AB}}$ to be an EPR pair, as soon as $\reg{C}$ is present the maximum winning probability drops to $\frac{1}{2} + \frac{1}{2\sqrt{2}} \approx 0.854$. 

Monogamy games have played an important role in quantum cryptography since some of the first  proofs of security of quantum key distribution, which make use of monogamy through uncertainty relations such as $H(Z|\reg{B})+H(X|\reg{C})\geq 1$, with $X$ and $Z$ classical random variables that denote the outcome of a measurement of $\reg{A}$ in the standard and Hadamard bases respectively~\cite{koashi2006unconditional,tomamichel2017largely}. In this note we study a  monogamy game introduced recently in~\cite{coladangelo2021hidden} and called ``strong monogamy game'' therein. Informally, in the game two players Bob and Charlie cooperate in an attempt to create two copies of a \emph{coset subspace state} 
\[ \ket{A_{s,s'}} \,=\, \frac{1}{\sqrt{|A|}}\sum_{u\in A}(-1)^{u\cdot s'} \ket{u+s}\;,\]
where $A$ is a linear subspace of $\F_2^n$ and $s,s'\in\F_2^n$ are arbitrary, such that given the first copy \emph{and} a description of $A$ it is possible to obtain a vector $u\in A+s=\{a+s\vert a\in A\}$, while given the other copy \emph{and} the description of $A$ it is possible to obtain a vector $v\in A^\perp + s'$, with $A^\perp = \{w:\, w\cdot u=0\,\forall u\in A\}$.\footnote{Here it is crucial that $A$ is revealed only after the ``copying'' has taken place, as given $A$ and $\ket{A_{s,s'}}$ itself it is possible to recover $s\mod A$ and $s'\mod A^\perp$.} (We describe the game in detail in Section~\ref{sec:coset-game}.)
In~\cite{coladangelo2021hidden} the authors show a sub-exponentially decaying bound on the players' maximum success probability in a variant of this game where from each copy a pair $(u,v)\in (A+s)\times (A^\perp + s')$ has to be returned. While the original subspace coset game is more useful for their cryptographic applications they are unable to analyze it. In this paper we show an exponentially decaying bound on the players' maximum success probability in the original game; as shown in~\cite{coladangelo2021hidden} this implies constructions for uncloneable decryption and copy-protection of pseudorandom functions based on post-quantum indistinguishability obfuscation and one-way functions only. (In contrast, in~\cite{coladangelo2021hidden} the same applications are obtained under the additional, strong assumption of extractable witness encryption. We refer to~\cite{coladangelo2021hidden} for additional discussion.)    

Our main result is stated as Theorem~\ref{thm:coset} in Section~\ref{sec:coset-game}. We first show the theorem directly by following the template introduced in~\cite{tomamichel2013monogamy} and adapting it to subspace coset states using some of the arguments from~\cite{coladangelo2021hidden} as well as some new steps. It is interesting to note that the direct proof does not make much use of the particular structure of the subspaces, rather just the fact that the states are constructed from cosets. As such, it might be possible to generalise this monogamy property to cosets states for a much larger class of groups, such as those in~\cite{ACP20}. Next, we revisit our direct proof by making a simple but useful connection between subspace coset states and BB'84 states. (This connection was first used in~\cite{vidick2021classical} to analyze a proof of quantum knowledge for subspace coset states.) To explain the connection, let $A$ be a subspace spanned by canonical vectors, $A = \textrm{Span}\{e_i,\, i\in \ol{T}\}$ for some set $T\subseteq \{1,\ldots,n\}$ with complement $\ol{T}$, and $s,s'\in \F_2^n$. Let $\theta\in\{0,1\}^n$ be the indicator vector of $\ol{T}$, i.e.\ $\theta_i=1$ if and only if $i\notin T$. Let $x\in\{0,1\}^n$ be such that $x_i=s_i$ whenever $i\in T$ and $x_i=s'_i$ whenever $i\in \ol{T}$. Then it is easily verified that 
\[ \ket{A_{s,s'}} \,=\, \ket{x}_\theta\;,\]
where we write $\ket{x}_\theta = \ket{x_1}_{\theta_1}\cdots\ket{x_n}_{\theta_n}$ with $\ket{x_i}_{\theta_i} = H^{\theta_i}\ket{x_i}$, $H$  the Hadamard gate. Thus coset subspace states for ``basis-aligned'' subspaces are exactly BB'84 states. This observation leads to a partition of subspace coset states such that subspace coset states in each element of the partition are in $1$-to-$1$ correspondence with BB'84 states under a simple unitary permutation of the standard basis, see Claim~\ref{claim:translation} for a precise formulation. While this observation implicitly appears in some of the arguments from~\cite{coladangelo2021hidden}, as well as in our direct proof of Theorem~\ref{thm:coset}, making it explicit allows us to directly relate the strong monogamy game from~\cite{coladangelo2021hidden} (which we refer to as the ``coset-monogamy game'') to a simple variant of the monogamy game from~\cite{tomamichel2013monogamy} (which we refer to as the ``basis-monogamy game'') whose maximum success probability we bound using a similar technique to the one introduced in their paper. Ultimately this ``proof by reduction'' is very similar to the direct proof; we include it in the hope that the simple reduction pointed out here will find further uses in the analysis of monogamy games motivated by tasks in quantum cryptography. 

In Section~\ref{sec:coset-game} we introduce the strong monogamy game (called coset-monogamy game here) and state our main result, Theorem~\ref{thm:coset}. In Section~\ref{sec:direct} we prove our main result. In Section~\ref{sec:basis-game} we introduce and analyze our variant of the BB'84-based monogamy game from~\cite{tomamichel2013monogamy} (called basis-monogamy game here). Finally in Section~\ref{sec:reduction} we show a reduction from the coset monogamy game to the basis monogamy game. 

\paragraph{Acknowledgments.} We thank Fatih Kaleoglu for pointing out an error in an earlier proof of Lemma~\ref{lem:permutations}, and an anonymous QIP referee for several corrections.
E.C. would like to thank Anne Broadbent. E.C.'s work is supported by a  CGS M scholarship from Canada's NSERC. T.V.\ is supported by NSF CAREER Grant CCF-1553477, AFOSR YIP award number FA9550-16-1-0495, MURI Grant FA9550-18-1-0161 and the IQIM, an NSF Physics Frontiers Center (NSF Grant PHY-1125565) with support of the Gordon and Betty Moore Foundation (GBMF-12500028).

\section{The coset-monogamy game}
\label{sec:coset-game}

The following game is a monogamy game introduced in~\cite{coladangelo2021hidden}, where it is called ``strong monogamy game'' (see Section 4.4 therein). For a linear subspace $A$ of $\F_2^n$ and $s,s'\in\{0,1\}^n$ recall the notation 
\[ \ket{A} \,=\, \frac{1}{\sqrt{|A|}}\sum_{u\in A} \ket{u}\qquad\text{and}\qquad  \ket{A_{s,s'}}\,=\, X^s Z^{s'} \ket{A}\,=\, \frac{1}{\sqrt{|A|}}\sum_{u\in A} (-1)^{u\cdot s'}\ket{u+s} \;,\]
where $X^s = X^{s_1}\otimes \cdots X^{s_n}$, $Z^{s'}=Z^{s'_1}\otimes \cdots Z^{s'_n}$ with $X = \begin{pmatrix} 0 & 1 \\ 1& 0 \end{pmatrix}$ and $Z = \begin{pmatrix} 1 & 0 \\ 0 & -1 \end{pmatrix}$. 

We formulate the game exactly as in~\cite[Section 4.4]{coladangelo2021hidden}. The only difference is that we rename $\mA_0$ into ``the adversary'', $\mA_1$ into ``Bob'' and $\mA_2$ into ``Charlie''. Thus the game is played between a trusted ``challenger'' and two untrusted, cooperating players Bob and Charlie. The game is parametrized by an even integer $n\geq 2$. 

\bigskip

\underline{Coset-monogamy game.}
\begin{enumerate}
	\item \emph{Preparation:} The challenger picks a uniformly random subspace $A \subseteq \F_2^n$ of dimension $\frac{n}{2}$ and two uniformly random elements $s,s'\in \F_2^n$. The challenger sends $\ket{A_{s,s'}}$ to the adversary. 
	\item The adversary applies a quantum channel $\Phi:\mH_\reg{A} \to \mH_\reg{B} \otimes \mH_{\reg{C}}$, where $\mH_\reg{A} = (\C^2)^{\otimes n}$ and $\mH_\reg{B}$, $\mH_\reg{C}$ are arbitrary. The adversary computes $\rho_{\reg{BC}} = \Phi(\proj{A_{s,s'}})$. It sends registers $\reg{B}$ to Bob and $\reg{C}$ to Charlie, respectively. 
	\item \emph{Question:} The challenger sends the description of $A$, in the form of a basis for it, to both Bob and Charlie.
	\item \emph{Answer:} Bob returns $s_1\in \F_2^n$ and Charlie returns $s_2 \in \F_2^n$. 
	\item \emph{Winning condition:} The adversary, Bob, and Charlie win if and only if $s_1 \in A+s$ and $s_2 \in A^\perp + s'$, where $A^\perp = \{v\in\F_2^n:\, v\cdot u = 0 \,\forall u\in A\}$. 
\end{enumerate}

Our main result is a bound on the maximum winning probability of the adversary, Bob, and Charlie in the coset-monogamy game.

\begin{theorem}\label{thm:coset}
	Let $n\geq 1$ be an even integer. 
	Let $q_n$ be the adversary, Bob, and Charlie's maximum probability of winning in the coset-monogamy game. Then 
	\[q_n \,\leq\, \sqrt{e}\parens*{\cos\frac{\pi}{8}}^n\;.\]
\end{theorem}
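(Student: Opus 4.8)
The plan is to follow the operator-norm method of~\cite{tomamichel2013monogamy}, adapted to coset states. First I would put the adversary's strategy in a convenient form: by Stinespring dilation I may assume the channel $\Phi$ is an isometry $V:\mH_\reg{A}\to\mH_\reg{B}\otimes\mH_\reg{C}\otimes\mH_\reg{E}$, and by Naimark's theorem that, on input a basis of $A$, Bob's and Charlie's measurements are projective with outcomes the \emph{cosets} $[s]\in\F_2^n/A$ and $[s']\in\F_2^n/A^\perp$ (all that the winning condition depends on); call them $\{Q^B_{A,[s]}\}$ on $\mH_\reg{B}$ and $\{Q^C_{A,[s']}\}$ on $\mH_\reg{C}$. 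Since $\{\ket{A_{s,s'}}\}_{[s],[s']}$ is an orthonormal basis for each fixed $A$, one has $\mathbb{E}_{A,s,s'}\proj{A_{s,s'}}=2^{-n}\Id$, so the game is unchanged if the challenger instead prepares a \emph{fixed} maximally entangled state $\ket{\Gamma}_{\reg{R}\reg{A}}$, sends $\reg{A}$ to the adversary, and only afterwards samples $A$ and measures $\reg{R}$ in the coset basis $\{\ket{A_{t,t'}}\}$ to define the targets $[t],[t']$. With $\sigma_{\reg{RBC}}=(\Id_\reg{R}\otimes\Phi)(\proj{\Gamma})$, which is now independent of $A$, the winning probability becomes
\[
q_n=\Tr\!\big[(\mathbb{E}_A M_A)\,\sigma_{\reg{RBC}}\big],\qquad
M_A:=\sum_{[t],[t']}\proj{A_{t,t'}}_\reg{R}\otimes Q^B_{A,[t]}\otimes Q^C_{A,[t']},
\]
and each $M_A$ is a projection (the three tensor factors commute, and distinct summands are orthogonal through the $\reg{R}$ factor). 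Since $\sigma_{\reg{RBC}}$ is a state, $q_n\leq\|\mathbb{E}_A M_A\|$, and everything reduces to estimating this operator norm.

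Following~\cite{tomamichel2013monogamy} I would control $\|\mathbb{E}_A M_A\|$ through the pairwise overlaps $\|M_A M_{A'}\|$ for $A\neq A'$, most simply via $\|\mathbb{E}_A M_A\|^2=\|\mathbb{E}_{A,A'}M_A M_{A'}\|\leq\mathbb{E}_{A,A'}\|M_A M_{A'}\|$, with the diagonal $A=A'$ (which contributes only $2^{-\Omega(n^2)}$) handled separately; the loss this incurs relative to TFKW's $(\cos^2\frac{\pi}{8})^n$ reflects that the index set here does not have the group structure that makes the TFKW shift argument tight. The heart is the estimate on $\|M_A M_{A'}\|$. Expanding, $M_A M_{A'}$ is a sum of terms $\langle A_{t,t'}|A'_{r,r'}\rangle\,\ket{A_{t,t'}}\!\bra{A'_{r,r'}}_\reg{R}\otimes Q^B_{A,[t]}Q^B_{A',[r]}\otimes Q^C_{A,[t']}Q^C_{A',[r']}$; one first computes the coset-state overlaps $\langle A_{t,t'}|A'_{r,r'}\rangle$ in closed form (they vanish unless the affine spaces $A+t,\,A'+r$ and $A^\perp+t',\,{A'}^\perp+r'$ are compatible, and otherwise have magnitude a power of $2$ controlled by $\dim(A\cap A')$ and $\dim(A^\perp\cap{A'}^\perp)$), and then bounds the operator norm of the whole sum — not term by term — crucially using the mutual orthogonality of $\{Q^B_{A,[t]}\}_{[t]}$ and of $\{Q^C_{A,[t']}\}_{[t']}$, exactly as in the BB'84 cross-term estimate of~\cite{tomamichel2013monogamy}. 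Claim~\ref{claim:translation} is the organizing tool: after a permutation of the standard basis taking $A$ to a coordinate subspace and $\ket{A_{t,t'}}$ to a BB'84 state $\ket{x}_\theta$ with $\theta$ of weight $\tfrac n2$, Bob's requirement becomes ``return the standard-basis half $x|_T$'' and Charlie's ``return the Hadamard-basis half $x|_{\ol{T}}$'', and the estimate reduces to single-qubit factors $\tfrac{1}{\sqrt2}$ at the coordinates where the two (permuted) bases disagree; summing the resulting geometric-type series over the possible intersection dimensions gives the rate $(\cos\frac{\pi}{8})^n$, and keeping track of the constants yields the factor $\sqrt e$.

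I expect this operator-norm estimate on $M_A M_{A'}$ to be the main obstacle: because Bob and Charlie may return \emph{any} representative of a coset of size $2^{n/2}$ rather than a single string, the products $Q^B_{A,[t]}Q^B_{A',[r]}$ are ``large'', and one must argue that this enlargement (relative to BB'84) does not destroy the exponential decay, while also working around the absence of a distance-compatible group structure on the set of dimension-$\tfrac n2$ subspaces. An alternative that sidesteps most of the subspace bookkeeping is the reduction of Section~\ref{sec:reduction}: by Claim~\ref{claim:translation}, a coset-game adversary can be simulated by a basis-game adversary who, upon receiving $\ket{x}_\theta$, applies the permutation unitary $\ket{v}\mapsto\ket{Lv}$ for a uniformly random $L\in\mathrm{GL}_n(\F_2)$ shared with Bob and Charlie before the split (which turns the received state into a \emph{uniformly random} coset state) and, once $\theta$ is revealed, lets Bob and Charlie compute $A=L\cdot\textrm{Span}\{e_i:\theta_i=1\}$ and run the coset-game measurements. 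This shows $q_n$ is at most the value of the basis-monogamy game of Section~\ref{sec:basis-game} — a BB'84 monogamy game with a random balanced $\theta$ revealed after the split, in which Bob must return the standard-basis half and Charlie the Hadamard-basis half of $x$ — which yields to the same operator-norm technique applied to the balanced slice of the hypercube.
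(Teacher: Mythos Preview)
Your alternative route --- simulate a coset-game adversary inside the basis-monogamy game by applying a random $\mathrm{GL}_n(\F_2)$ permutation before the split --- is exactly the reduction the paper carries out in Section~\ref{sec:reduction}, and together with Theorem~\ref{thm:basis-game} it gives the stated bound. So that half of your proposal is correct and matches the paper.

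Your direct route, however, diverges from the paper's direct proof in Section~\ref{sec:direct} at the crucial step, and the divergence is not innocuous. You propose to bound $\|\mathbb{E}_A M_A\|$ via the squaring inequality $\|\mathbb{E}_A M_A\|^2\le\mathbb{E}_{A,A'}\|M_A M_{A'}\|$, and you claim that ``the index set here does not have the group structure that makes the TFKW shift argument tight.'' The paper does \emph{not} use squaring; it first rewrites the uniform measure on $\textsf{G}(n/2,n)$ as an average over ordered bases $\beta$ of $\F_2^n$ followed by a uniform choice of a size-$n/2$ subset $\gamma\subseteq\beta$, pulls the $\beta$-average outside the norm by the triangle inequality, and then applies the TFKW permutation lemma (Lemma~\ref{lem:sum-bound}) to the remaining average over $\gamma$, using the orthogonal permutations on the balanced slice $C_{n,n/2}$ constructed in Lemma~\ref{lem:permutations}. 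So the ``group structure'' objection is misplaced: once one conditions on a basis, the index set is $C_{n,n/2}$, and Lemma~\ref{lem:permutations} supplies exactly the permutations needed. This two-level decomposition is what produces the multiplicities $\binom{n/2}{k}^2$ and hence (via Lemma~\ref{lem:binomial}) the precise constant $\sqrt{e}(\cos\frac{\pi}{8})^n$; your squaring route would instead require summing $2^{(\dim(A\cap A')-n/2)/2}$ over \emph{all} pairs of subspaces, a Gaussian-binomial computation you do not carry out and which has no reason to yield the same constant.

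There is also a gap in your overlap argument. You say Claim~\ref{claim:translation} lets you rotate $A$ to a coordinate subspace so that $\ket{A_{t,t'}}$ becomes a BB'84 state, after which ``the estimate reduces to single-qubit factors $\tfrac{1}{\sqrt2}$ at the coordinates where the two (permuted) bases disagree.'' But a single permutation $U_\mB$ can align \emph{one} subspace with the coordinate axes, not both; for a generic pair $(A,A')$ there is no common basis, so the per-coordinate BB'84 picture is unavailable. The paper's overlap bound (Lemma~\ref{lem:overlaps}) does hold for arbitrary $A,B\in\textsf{G}(n/2,n)$, but it is proved directly from the coset-state definitions, not via Claim~\ref{claim:translation}. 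In the paper's proof this issue never arises precisely because, after the basis decomposition, every pair $(\mathrm{span}(\gamma),\mathrm{span}(\pi_j(\gamma)))$ \emph{is} spanned by a common basis $\beta$ --- which is another reason that decomposition is the key idea you are missing.
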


\begin{remark}
We have that $\cos\frac{\pi}{8}\approx 0.924$, whereas in~\cite{tomamichel2013monogamy} the bound $(1/2+1/(2\sqrt{2}))^n \approx 0.854^n$ is obtained on the success probability for the variant of the game where Bob and Charlie both have to answer a complete string of measurement outcomes $y,z\in\{0,1\}^n$. Since our version of the game is easier, the bound is slightly weaker. We did not attempt to check if the bound we obtain is optimal. 
\end{remark}

We give two proofs of the theorem. Ultimately, both proofs rely on the technique from~\cite{tomamichel2013monogamy}, and lead to the same numerical bound on the success probability. The difference is that the first proof is direct, while the second proof proceeds by a reduction to a variant of the monogamy game from~\cite{tomamichel2013monogamy}. Since the reduction is intuitively clear, and the monogamy game we reduce to, being based on BB'84 states, is easier to analyze, the second proof is conceptually simpler and potentially more general. However, it is less direct.

\section{Direct proof}
\label{sec:direct}

We give a direct proof of Theorem~\ref{thm:coset}. The proof proceeds in two steps. In the first step we reduce to the analysis of an extended nonlocal game of the form considered in~\cite{JMRW16}. This step is standard in the analysis of monogamy games, and also appears as~\cite[Lemma C.6]{coladangelo2021hidden}. We formulate it in Lemma~\ref{lem:correspondence} below. In the second step we bound the maximum success probability in the extended nonlocal game. This step relies on a technique introduced in~\cite{tomamichel2013monogamy} to bound the operator norm of a tripartite operator introduced to model the players' actions in the game. We describe this step in Section~\ref{sec:tom}.

\subsection{Reduction to an extended nonlocal game}

Write $\textsf{G}\parens*{\tfrac{n}{2},n}$ for the set of linear subspaces of $\F_2^n$ of dimension $\frac{n}{2}$. For $A\in \textsf{G}\parens*{\tfrac{n}{2},n}$ write $\textsf{CS}(A)$ for a fixed set of representatives of the cosets of $A$. In particular, $|\textsf{CS}(A)|=2^{\frac{n}{2}}$.

\begin{lemma}\label{lem:correspondence}
Fix a strategy for the coset-monogamy game, consisting of a channel $\Phi:\mH_\reg{A} \to \mH_\reg{B} \otimes \mH_{\reg{C}}$ and for each $A\in\textsf{G}\parens*{\tfrac{n}{2},n}$ POVMs $\{B^A_s\}_{s\in\textsf{CS}(A)}$ for Bob and  $\{C^A_{s'}\}_{s'\in\textsf{CS}(A^\perp)}$ for Charlie. Let $q'_n$ be the probability that this strategy succeeds in the game. Then
	\begin{align*}
	q_n'&=\Es{A\in\textsf{G}\parens*{\frac{n}{2},n}}\Es{\substack{s\in\textsf{CS}(A)\\s'\in\textsf{CS}(A^\perp)}}\Tr\big( (B^A_s\otimes C^A_{s'})\Phi\big(\proj{A_{s,s'}}\big)\big)\\
	&=\Es{A\in \textsf{G}\parens*{\tfrac{n}{2},n}}\sum_{\substack{s\in\textsf{CS}(A)\\s'\in\textsf{CS}(A^\perp)}}\Tr\big(\big(\proj{A_{s,s'}}\otimes B^A_s\otimes C^A_{s'}\big)\rho\big)\;,
	\end{align*}
	where $\rho=(\Id_{\reg{A}}\otimes\Phi_{\reg{A'}})(\proj{\phi^+}_{\reg{AA'}}^{\otimes n})$ with $\ket{\phi^+}$ the EPR pair, $\ket{\phi^+} = \frac{1}{\sqrt{2}}(\ket{00}+\ket{11})$ and all expectations are uniform averages. 
\end{lemma}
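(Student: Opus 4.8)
The plan is to verify the two displayed identities in turn, the first by unwinding the definition of the game and the second by the standard EPR-pair "rewind the channel" trick.

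For the first equality, I would start from the winning condition. Bob's answer $s_1$ lies in $A+s$ exactly when the coset representative of $s_1$ modulo $A$ equals the representative of $s$; since we have fixed the representative set $\textsf{CS}(A)$, Bob may as well output an element of $\textsf{CS}(A)$ and the correct answer is the representative $\bar s\in\textsf{CS}(A)$ of the coset $A+s$. Likewise Charlie should output the representative of $A^\perp+s'$ in $\textsf{CS}(A^\perp)$. Because $\ket{A_{s,s'}}$ depends only on $s\bmod A$ and $s'\bmod A^\perp$ (the phase $(-1)^{u\cdot s'}$ is insensitive to adding an element of $A^\perp$ to $s'$, and $A+s$ is insensitive to adding an element of $A$ to $s$), sampling $s,s'$ uniformly from $\F_2^n$ and then quotienting is the same as sampling $s\in\textsf{CS}(A)$, $s'\in\textsf{CS}(A^\perp)$ uniformly. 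With $A$ uniform in $\textsf{G}(\tfrac n2,n)$, averaging the indicator of the event $\{s_1\in A+s,\ s_2\in A^\perp+s'\}$ against the strategy gives $q_n'=\Es{A}\Es{s,s'}\Tr\big((B^A_s\otimes C^A_{s'})\,\Phi(\proj{A_{s,s'}})\big)$, which is the first line.

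For the second equality, I would use the identity $(M\otimes\Id)\ket{\phi^+}^{\otimes n} = (\Id\otimes M^{\mathsf T})\ket{\phi^+}^{\otimes n}$ for any operator $M$ on $(\C^2)^{\otimes n}$, applied with $M$ proportional to $\proj{A_{s,s'}}$. Concretely, $\Tr\big((B^A_s\otimes C^A_{s'})\Phi(\proj{A_{s,s'}})\big)$ can be written as $\langle A_{s,s'}|\,\Phi^\dagger(B^A_s\otimes C^A_{s'})\,|A_{s,s'}\rangle$, and since $\ket{A_{s,s'}}$ is real in the standard basis we have $\proj{A_{s,s'}}^{\mathsf T}=\proj{A_{s,s'}}$; feeding the first half of $\ket{\phi^+}^{\otimes n}$ through $\proj{A_{s,s'}}$ and the second half through $\Phi$ then yields $\Tr\big((\proj{A_{s,s'}}\otimes B^A_s\otimes C^A_{s'})\,\rho\big)$ with $\rho=(\Id_\reg{A}\otimes\Phi_{\reg{A'}})(\proj{\phi^+}^{\otimes n}_{\reg{AA'}})$, up to the normalization factor $2^n\cdot 2^{-n}=1$ coming from $\Tr\big(\proj{A_{s,s'}}\proj{A_{s,s'}}\big)=1$ matching the $2^{-n}$ in $\ket{\phi^+}^{\otimes n}$. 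Replacing the uniform average $\Es{s,s'}$ by the sum $\sum_{s,s'}$ absorbs a factor $|\textsf{CS}(A)|\cdot|\textsf{CS}(A^\perp)|=2^{n/2}\cdot 2^{n/2}=2^n$, which is exactly cancelled by the $2^{-n}$ weight already present in $\ket{\phi^+}^{\otimes n}$ once the projectors are un-normalized; I would track these constants carefully to land on the stated second line with a plain sum and no prefactor.

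The only real subtlety — and the step I would be most careful with — is the bookkeeping of normalizations between the "expectation" form and the "sum" form, together with the observation that $\ket{A_{s,s'}}$ is genuinely a function of the cosets only, so that the uniform average over $s,s'\in\F_2^n$ coincides with the uniform average over coset representatives. Everything else is a routine application of the transpose trick and the definition of $\Phi^\dagger$; no operator norms or monogamy arguments enter at this stage, those being deferred to Section~\ref{sec:tom}.
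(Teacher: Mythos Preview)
Your proposal is correct and follows essentially the same route as the paper: the first equality is ``by definition'' (your discussion of why the state and the winning condition depend only on the cosets of $s,s'$ spells this out), and the second equality is obtained by expanding $\proj{\phi^+}^{\otimes n}$ in the computational basis, which is exactly the transpose trick you name. The only cosmetic difference is that the paper writes out the sum over $r,r'$ directly rather than invoking $\Phi^\dagger$ and the identity $(M\otimes\Id)\ket{\phi^+}^{\otimes n}=(\Id\otimes M^{\mathsf T})\ket{\phi^+}^{\otimes n}$; your explicit observation that $\ket{A_{s,s'}}$ is real in the standard basis is used implicitly in the paper's third displayed line, and your tracking of the $2^{-n}$ versus $|\textsf{CS}(A)|\cdot|\textsf{CS}(A^\perp)|=2^n$ normalization is the right bookkeeping.
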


While the first equality is by definition, the second equality is what we refer to as a ``reduction to an extended nonlocal game.'' This is because the second line can be interpreted as the success probability in the following three-player game: (i) Bob and Charlie prepare a tripartite state $\rho_{\reg{ABC}}$ such that $\reg{A}$ is an $n$-qubit register. They give $\reg{A}$ to Alice and keep $\reg{B}$ and $\reg{C}$ respectively. (ii) Alice selects a uniformly random subspace $A \in \textsf{G}\parens*{\tfrac{n}{2},n}$ and gives $A$ to Bob and Charlie. She measures $\reg{A}$ using the projective measurement $\{\proj{A_{s,s'}}\}$ with outcomes $(s,s')\in \textsf{CS}(A)\times \textsf{CS}(A^\perp)$. (iii) Bob and Charlie measure their registers using arbitrary POVM $\{B^A_s\}$ and $\{C^A_{s'}\}$ respectively. They win if and only if they obtain outcomes, $s$ for Bob and $s'$ for Charlie, that match Alice's. 

\begin{proof}
To show the second equality we expand using the definition of $\rho$
	\begin{align*}
	\Tr\big(\big(\proj{A_{s,s'}}\otimes B^A_s\otimes C^A_{s'}\big)\rho\big)&=\frac{1}{2^n}\sum_{r,r'\in\mathbb{F}_2^n}\Tr\big(\big(\proj{A_{s,s'}}\otimes B^A_s\otimes C^A_{s'}\big)\big(\ket{r}\!\bra{r'}\otimes\Phi(\ket{r}\!\bra{r'})\big)\big)\\
	&=\frac{1}{2^n}\sum_{r,r'\in\mathbb{F}_2^n}\langle r' \ket{A_{s,s'}}\langle r\ket{A_{s,s'}}\Tr\big(\big(B^A_s\otimes C^A_{s'}\big)\Phi(\ket{r}\!\bra{r'})\big)\\
	&=\frac{1}{2^n}\Tr\Big( \big(B^A_s\otimes C^A_{s'}\big)\Phi\Big(\sum_{r\in\mathbb{F}_2^n} \ket{r} \langle r \proj{A_{s,s'}}\sum_{r'\in\mathbb{F}_2^n}\proj{r'}\Big)\Big)\\
	&=\frac{1}{2^n}\Tr\big( \big(B^A_s\otimes C^A_{s'}\big)\Phi\big(\proj{A_{s,s'}}\big)\big)\;,
	\end{align*}
	which gives the result.
\end{proof}

\subsection{Analysis of extended nonlocal game}
\label{sec:tom}

We need two preliminary lemmas. The first bounds the overlap of operators constructed as sums of coset state projections. We use $\|\cdot\|$ to denote the operator norm, i.e.\ the largest singular value. 

\begin{lemma}\label{lem:overlaps}
	For any $A,B\in \textsf{G}\parens*{\tfrac{n}{2},n}$, $s'\in\textsf{CS}(A^\perp)$ and $t\in\textsf{CS}(B)$ we have that the overlap
	\begin{align}
	\Big\|\sum_{s\in\textsf{CS}(A)}\proj{A_{s,s'}}\sum_{t'\in\textsf{CS}(B^\perp)}\proj{B_{t,t'}}\Big\|\,\leq\,\sqrt{2^{\dim(A\cap B)-\frac{n}{2}}}\;.
	\end{align}
\end{lemma}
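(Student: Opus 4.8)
The plan is to exploit the fact that the two sums of projections are themselves projections onto natural subspaces, so that the operator norm of their product becomes a question about the overlap of two subspaces of $(\C^2)^{\otimes n}$. First I would observe that $P_A := \sum_{s\in\textsf{CS}(A)}\proj{A_{s,s'}}$ is a projection: the states $\{\ket{A_{s,s'}}\}_{s\in\textsf{CS}(A)}$ for fixed $s'$ are mutually orthogonal (they are supported on disjoint cosets $A+s$), so $P_A$ is the orthogonal projection onto $V_A := \mathrm{Span}\{\ket{A_{s,s'}} : s\in\textsf{CS}(A)\}$. Similarly $P_B := \sum_{t'\in\textsf{CS}(B^\perp)}\proj{B_{t,t'}}$ is the orthogonal projection onto $V_B := \mathrm{Span}\{\ket{B_{t,t'}} : t'\in\textsf{CS}(B^\perp)\}$. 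For a product of two orthogonal projections, $\|P_A P_B\| = \cos\vartheta$ where $\vartheta$ is the minimal principal angle between $V_A$ and $V_B$, and equivalently $\|P_A P_B\|^2 = \|P_A P_B P_A\|$, the largest eigenvalue of $P_A P_B P_A$.

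Next I would compute $P_A P_B P_A$, or rather directly estimate $\|P_A P_B\|^2$ by bounding $\|P_B\ket{\psi}\|^2$ for an arbitrary unit vector $\ket{\psi}\in V_A$, i.e.\ $\ket{\psi} = \sum_s \alpha_s \ket{A_{s,s'}}$ with $\sum_s|\alpha_s|^2 = 1$. The key computation is the inner products $\langle A_{s,s'}\ket{B_{t,t'}}$. Writing both coset states out explicitly, $\langle A_{s,s'}\ket{B_{t,t'}} = \frac{1}{\sqrt{|A||B|}} \sum_{u\in A, w\in B} (-1)^{u\cdot s' + w\cdot t'} \langle u+s \ket{w+t}$; the inner product is nonzero only when $u+s = w+t$, i.e.\ when $s+t \in A+B$, and in that case the sum over the remaining freedom collapses to a character sum over $A\cap B$ which is either $0$ or has magnitude $|A\cap B|$. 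This is the standard fact that coset states on subspaces $A$ and $B$ either agree (up to phase) on the intersection structure or are orthogonal; the upshot is $|\langle A_{s,s'}\ket{B_{t,t'}}|^2 \in \{0, \, 2^{\dim(A\cap B)}/\sqrt{|A||B|}^2\cdot(\text{count})\}$ — more precisely each such overlap has squared magnitude exactly $2^{\dim(A\cap B)}\cdot 2^{\dim(A\cap B)}/(|A||B|) = 2^{2\dim(A\cap B) - n}$ when nonzero (using $|A|=|B|=2^{n/2}$), and for each fixed $s$ there is at most one $t'$ (given $t$) making it nonzero, and the number of $t$ for which it can be nonzero is controlled by $|(A+B)/B|$-type counting.

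The cleanest route, which I would actually pursue, is to compute $\mathrm{Tr}(P_A P_B)$ exactly and combine it with a rank bound: $\mathrm{Tr}(P_A P_B) = \sum_{s,t} |\langle A_{s,s'}\ket{B_{t,t'(s,t)}}|^2$ summed over the nonzero terms, which evaluates to $2^{\dim(A\cap B)}$ (each of the $2^{n/2}$ values of $s$ contributes, and the nonzero overlaps with fixed $s$ sum appropriately). Since $P_A P_B P_A \preceq \mathrm{Tr}(P_A P_B P_A)\cdot\Id$ is too weak, instead I use $\|P_A P_B\|^2 = \|P_A P_B P_A\| \le \|P_B P_A\|_{\mathrm{op}}^2$ and bound the operator norm of the Gram-type matrix $M_{s,\tilde s} = \langle A_{s,s'} | P_B | A_{\tilde s,s'}\rangle$: this matrix has entries of magnitude at most $2^{2\dim(A\cap B)-n}$-ish and a sparsity pattern (a $2^{n/2}\times 2^{n/2}$ matrix) such that each row has at most $2^{\dim(A\cap B)}/2^{?}$ nonzero entries — bounding $\|M\|$ by $\sqrt{(\text{max row sum})(\text{max column sum})}$ gives the claimed $\sqrt{2^{\dim(A\cap B) - n/2}}$.

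The main obstacle I anticipate is the bookkeeping in the character-sum computation of $\langle A_{s,s'}\ket{B_{t,t'}}$ and, in particular, tracking exactly how many pairs $(t,t')$ give a nonzero overlap with a given $(s,s')$ and what the common magnitude is — getting the powers of $2$ to line up to produce precisely $2^{\dim(A\cap B)-n/2}$ under the square root, rather than an off-by-a-constant-in-the-exponent bound. The identities $\dim(A+B) = \dim A + \dim B - \dim(A\cap B) = n - \dim(A\cap B)$ and $(A\cap B)^\perp = A^\perp + B^\perp$ will be the workhorses, and I expect the coset representatives $\textsf{CS}(A^\perp)$ and $\textsf{CS}(B^\perp)$ need to be handled carefully since the nonzero pattern depends on cosets of $A^\perp + B^\perp$ inside $\F_2^n$. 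Once the Gram matrix $M$ is pinned down, the operator-norm bound via row/column sums is routine.
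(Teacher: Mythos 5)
Your skeleton --- both factors are orthogonal projections, $\|P_AP_B\|^2=\|P_AP_BP_A\|$, and one should control the Gram matrix $M_{s,\tilde s}=\bra{A_{s,s'}}P_B\ket{A_{\tilde s,s'}}$ in the orthonormal basis $\{\ket{A_{s,s'}}\}_{s\in\textsf{CS}(A)}$ of $V_A$ --- is the same as the paper's, and it does lead to the stated bound. What is missing is the one observation that makes all of the bookkeeping you worry about disappear: summing $\proj{B_{t,t'}}$ over a full set of phase representatives $t'\in\textsf{CS}(B^\perp)$ gives not just \emph{some} projection onto a span of coset states but the \emph{coordinate} projection $\Pi_{B+t}=\sum_{b\in B+t}\proj{b}$ (the average over $t'$ kills every off-diagonal term $\kb{b+t}{b'+t}$ with $b\neq b'$). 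Consequently $\Pi_{B+t}\ket{A_{s,s'}}$ is supported on $(A+s)\cap(B+t)\subseteq A+s$, the cosets $A+s$ are pairwise disjoint, so $M$ is \emph{diagonal} with entries $|(A+s)\cap(B+t)|/2^{n/2}\leq|A\cap B|/2^{n/2}$. No character sums or Schur tests are needed, and the square root in the statement is exactly the passage from $\|M\|=\|P_AP_B\|^2$ to $\|P_AP_B\|$.

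The route you commit to instead contains a concrete error. For fixed $s$ and $t$ there is \emph{not} at most one $t'$ with $\langle A_{s,s'}\ket{B_{t,t'}}\neq 0$: the overlap is nonzero iff $(A+s)\cap(B+t)\neq\emptyset$ and $s'+t'\in A^\perp+B^\perp=(A\cap B)^\perp$, which selects $2^{n/2-\dim(A\cap B)}$ of the $2^{n/2}$ representatives $t'\in\textsf{CS}(B^\perp)$. Each nonzero overlap has squared magnitude $2^{2\dim(A\cap B)-n}$, so the diagonal entries of $M$ equal $2^{n/2-\dim(A\cap B)}\cdot2^{2\dim(A\cap B)-n}=2^{\dim(A\cap B)-n/2}$, not (roughly) $2^{2\dim(A\cap B)-n}$ as you estimate. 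With your ``one $t'$'' count you would conclude $\|P_AP_B\|\leq 2^{\dim(A\cap B)-n/2}$, which is strictly stronger than the lemma and false: for $\dim(A\cap B)=0$ every $(A+s)\cap(B+t)$ is a single point and the true norm is $2^{-n/4}$, not $2^{-n/2}$. Relatedly, the final accounting conflates $\|M\|$ with $\|P_AP_B\|$: the row/column-sum bound yields $\|M\|\leq 2^{\dim(A\cap B)-n/2}$, and the claimed bound is its square root. These are exactly the ``powers of $2$'' you flag as the anticipated obstacle, so as written the plan does not yet constitute a proof, though repairing it along the lines above collapses it onto the paper's argument.
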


\begin{proof}
	First, note that
	\begin{align}
	\sum_{t'\in\textsf{CS}(B^\perp)}\proj{B_{t,t'}}&=\frac{1}{2^{\frac{n}{2}}}\sum_{t'\in\mathbb{F}_2^n}\proj{B_{t,t'}}\notag\\
	&=\frac{1}{2^n}\sum_{t'\in\mathbb{F}_2^n}\sum_{b,b'\in B}(-1)^{(b+b')\cdot t'}\kb{b+t}{b'+t}\notag\\
	&=\sum_{b,b'\in B}\delta_{b,b'}\kb{b+t}{b'+t}\notag\\
	&=\sum_{b\in B+t}\proj{b}\;,
	\end{align}
 a projection onto the subspace spanned by the vectors given by the elements of the coset $B+t$. Let $\Pi_{B+t} = \sum_{b\in B+t}\proj{b}$. Then
	\begin{align}
\Big\|\sum_{s\in\textsf{CS}(A)}\proj{A_{s,s'}}\sum_{t'\in\textsf{CS}(B^\perp)}\proj{B_{t,t'}}\Big\|
&= \Big\| \sum_{s\in\textsf{CS}(A)}\proj{A_{s,s'}} \Pi_{B+t} \Big\|\notag\\
&= \Big\|  \Pi_{B+t}  \Big( \sum_{s\in\textsf{CS}(A)}  \proj{A_{s,s'}}\Big) \Pi_{B+t} \Big\|^{1/2}
\;,\label{eq:overlaps-1a}
\end{align}
where the second equality uses that $\{\proj{A_{s,s'}}\}_{s\in \textsf{CS}(A)}$ are orthogonal projectors. Since we have that $\ket{A_{s,s'}}$ is a superposition of basis elements in $A+s$, so $\Pi_{B+s}\ket{A_{s,s'}}$ is a superposition of basis elements in $(A+s)\cap(B+t)$, giving that the set of $\Pi_{B+t}\ket{A_{s,s'}}$ over $s$ is orthogonal. Thus,
\begin{align}
\Big\|  \Pi_{B+t}  \Big( \sum_{s\in\textsf{CS}(A)}  \proj{A_{s,s'}}\Big) \Pi_{B+t} \Big\|&\leq \max_{s\in\textsf{CS}(A)} \big\| \Pi_{B+t}  \proj{A_{s,s'}}\Pi_{B+t}\big\| \notag\\
&=\max_{s\in\textsf{CS}(A)} \bra{A_{s,s'}} \Pi_{B+t} \ket{A_{s,s'}}
\;,\label{eq:overlaps-1}
\end{align}
which uses $\|\sum_s X_s\|\leq \max_s \|X_s\|$ for $X_i$ Hermitian with orthogonal range.
 Now, for any $s\in \textsf{CS}(A)$,
	\begin{align*}
	 \bra{A_{s,s'}} \Pi_{B+t} \ket{A_{s,s'}} &= \frac{1}{2^{\frac{n}{2}}}\big|(A+s)\cap(B+t)\big|\\
	&\leq\frac{1}{2^{\frac{n}{2}}}\big|A\cap B\big|\;.
	\end{align*}
	Plugging this back into~\eqref{eq:overlaps-1} completes the proof.
\end{proof}

The second lemma is a key bound used in~\cite{tomamichel2013monogamy}.

\begin{lemma}[Lemma 2 in~\cite{tomamichel2013monogamy}]\label{lem:sum-bound}
	Let $P_1,...,P_n$ be positive semidefinite operators on a Hilbert space. Then
	\begin{align*}
		\Big\|\sum_{i=1}^nP_i\Big\| &\leq\sum_{i=1}^n\,\max_{j=1,...,n}\,\big\|\sqrt{P_j}\sqrt{P_{\pi_i(j)}}\big\|\;,
	\end{align*}
	where $\pi_1,...,\pi_n$ is any set of mutually orthogonal permutations of $\{1,...,n\}$, i.e. $\pi_i\circ\pi_j^{-1}$ only has a fixed point if $i=j$.
\end{lemma}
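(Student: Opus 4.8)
\textbf{Proof plan for Lemma~\ref{lem:sum-bound}.}

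The plan is to follow the standard ``TFKW'' operator-norm argument: bound $\|\sum_i P_i\|$ by pairing terms according to the orthogonal permutations and applying Cauchy--Schwarz to each pair. Concretely, I would start from the variational characterization of the operator norm: there is a unit vector $\ket{\psi}$ with $\|\sum_i P_i\| = \bra{\psi}\sum_i P_i\ket{\psi} = \sum_i \|\sqrt{P_i}\ket{\psi}\|^2$. Write $\ket{v_i} = \sqrt{P_i}\ket{\psi}$, so the target quantity is $\sum_i \|\ket{v_i}\|^2$. The key trick is to re-express the same sum by reindexing with each permutation $\pi_i$ and taking a geometric-mean-style bound, turning the diagonal sum into a sum over the $n$ ``shifted diagonals'' picked out by $\pi_1,\dots,\pi_n$.

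The heart of the argument is the inequality
\begin{align*}
\Big(\sum_{j=1}^n \|\ket{v_j}\|^2\Big)^2 \,=\, \Big\| \sum_{j=1}^n \kb{v_j}{v_j} \Big\|^2 \;\leq\; \Big\| \sum_{j=1}^n \sum_{i=1}^n \kb{v_j}{v_{\pi_i(j)}} \Big\|,
\end{align*}
where I use that $T = \sum_j \kb{v_j}{v_j}$ is positive semidefinite and, because the $\pi_i$ are mutually orthogonal, the $n$ matrices $M_i = \sum_j \kb{v_j}{v_{\pi_i(j)}}$ sum to a matrix $M = \sum_i M_i$ whose ``diagonal-in-the-$v$-basis'' part recovers $T^2$ in an appropriate sense — more carefully, one writes $V = \sum_j \kb{j}{v_j}$ as an operator into an auxiliary space indexed by $j$, notes $\|V V^\dagger\| = \|\sum_j \kb{v_j}{v_j}\| = \sum_j\|\ket{v_j}\|^2$ and $\|V^\dagger V\| = $ the same, and that the orthogonal permutations let one assemble a permutation-type block decomposition of $V^\dagger V \otimes (\text{something})$ whose blocks are exactly the $M_i$. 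Then triangle inequality gives $\|M\| \le \sum_i \|M_i\|$, and each $\|M_i\| = \|\sum_j \kb{v_j}{v_{\pi_i(j)}}\| \le \max_j \|\ket{v_j}\|\,\|\ket{v_{\pi_i(j)}}\| \le \max_j \|\sqrt{P_j}\sqrt{P_{\pi_i(j)}}\|$ — the last step because $\|\sqrt{P_j}\sqrt{P_k}\| = \|\sqrt{P_j}\ket{\psi}\bra{\psi}\sqrt{P_k}\|^{1/2}\cdot(\dots)$; more simply $\|\ket{v_j}\|\|\ket{v_k}\| = \|\sqrt{P_j}\ket\psi\|\|\sqrt{P_k}\ket\psi\| \le \|\sqrt{P_j}\ket{\psi}\bra{\psi}\sqrt{P_k}\| \cdot$ well, one just bounds $\|\sqrt{P_j}\ket\psi\| \le \|\sqrt{P_j}\|$ is too lossy, so instead use $\|\ket{v_j}\|\,\|\ket{v_k}\| \ge \bra{\psi}\sqrt{P_j}\sqrt{P_k}\ket{\psi}$ in the wrong direction — hence the cleaner route is to keep $\|\ket{v_j}\|\|\ket{v_{\pi_i(j)}}\|$ and only at the very end observe $\max_j \|\ket{v_j}\|\,\|\ket{v_{\pi_i(j)}}\| \le \max_j \|\sqrt{P_j}\sqrt{P_{\pi_i(j)}}\|$ since $\|\sqrt{P_j}\ket\psi\|\,\|\sqrt{P_k}\ket\psi\| = \| \sqrt{P_j}\ket\psi\bra\psi\sqrt{P_k}\| \le \|\sqrt{P_j}\,\Pi\,\sqrt{P_k}\| \le \|\sqrt{P_j}\sqrt{P_k}\|$ where $\Pi = \proj\psi$. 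Combining, $(\sum_j\|\ket{v_j}\|^2)^2 \le (\sum_j \|\ket{v_j}\|^2)\cdot \sum_i \max_j \|\sqrt{P_j}\sqrt{P_{\pi_i(j)}}\|$, and dividing through by $\sum_j\|\ket{v_j}\|^2$ (which we may assume nonzero) yields the claim.

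Since this is precisely Lemma~2 of~\cite{tomamichel2013monogamy}, I would in practice simply cite that reference for the proof; the sketch above records the argument for completeness. The step I expect to be the main obstacle to present cleanly is the bookkeeping that shows $\|\sum_{j}\kb{v_j}{v_j}\|^2$ is dominated by $\|\sum_j\sum_i \kb{v_j}{v_{\pi_i(j)}}\|$ — i.e.\ correctly setting up the auxiliary isometry $V$ and verifying that mutual orthogonality of the $\pi_i$ is exactly what is needed so that the $n$ off-diagonal blocks, together with the diagonal, reconstruct the relevant square; everything after that is the triangle inequality plus a one-line Cauchy--Schwarz estimate.
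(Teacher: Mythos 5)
The paper itself does not reprove this lemma --- it is quoted verbatim as Lemma~2 of~\cite{tomamichel2013monogamy} --- so your decision to simply cite that reference matches what the paper does and is sufficient. However, the sketch you record ``for completeness'' is not the TFKW argument, and several of its steps are false, so it should not be kept as written. First, $\sum_j\|\ket{v_j}\|^2$ is the \emph{trace} of $\sum_j\kb{v_j}{v_j}$, not its operator norm, so the opening identity $(\sum_j\|\ket{v_j}\|^2)^2=\|\sum_j\kb{v_j}{v_j}\|^2$ fails whenever the $v_j$ are not all parallel. Second, the displayed ``heart of the argument'' is false: since the $n$ permutations are mutually orthogonal, $\{\pi_i(j)\}_{i}=\{1,\dots,n\}$ for every $j$, hence $\sum_{i,j}\kb{v_j}{v_{\pi_i(j)}}=\kb{w}{w}$ with $\ket{w}=\sum_j\ket{v_j}$, and the right-hand side can be $0$ while the left-hand side is positive. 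Third, $\|\sum_j\kb{v_j}{v_{\pi_i(j)}}\|\le\max_j\|\ket{v_j}\|\,\|\ket{v_{\pi_i(j)}}\|$ fails when the rank-one terms overlap (take all $v_j$ equal). Fourth, and most importantly, the closing estimate $\|\ket{v_j}\|\,\|\ket{v_k}\|\le\|\sqrt{P_j}\sqrt{P_k}\|$ is simply wrong: with $P_j=\proj{0}$, $P_k=\proj{1}$ and $\ket{\psi}=\ket{+}$ the left side is $\frac12$ and the right side is $0$ (correspondingly, $\|A\Pi B\|\le\|AB\|$ for a projector $\Pi$ is false).

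The actual proof keeps the auxiliary index space rather than contracting with a single vector: let $X=(\sqrt{P_1},\dots,\sqrt{P_n}):\mH^{\oplus n}\to\mH$, so that $\|\sum_iP_i\|=\|XX^\dagger\|=\|X^\dagger X\|$, where $X^\dagger X$ is the block operator with $(j,k)$ block $\sqrt{P_j}\sqrt{P_k}$. Mutual orthogonality of the $\pi_i$ means the sets $\{(j,\pi_i(j)):j\}$ partition all $n^2$ index pairs, so $X^\dagger X=\sum_iD_i$ with each $D_i$ having exactly one nonzero block per block-row and per block-column; then $D_iD_i^\dagger$ is block-diagonal, whence $\|D_i\|=\max_j\|\sqrt{P_j}\sqrt{P_{\pi_i(j)}}\|$, and the triangle inequality finishes. (If you prefer your eigenvector route, it can be repaired: $\|\sum_iP_i\|^2\le\sup_\psi\bra{\psi}(\sum_iP_i)^2\ket{\psi}$ and $\bra{\psi} P_jP_k\ket{\psi}=\bra{v_j}\sqrt{P_j}\sqrt{P_k}\ket{v_k}\le\|\ket{v_j}\|\,\|\sqrt{P_j}\sqrt{P_k}\|\,\|\ket{v_k}\|$ --- i.e.\ the factor $\|\sqrt{P_j}\sqrt{P_k}\|$ must be extracted from the middle of the matrix element, not recovered at the end from $\|\ket{v_j}\|\,\|\ket{v_k}\|$ --- followed by Cauchy--Schwarz over $j$.)
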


We give the permutations we will use to apply Lemma~\ref{lem:sum-bound}. For $n\geq 2$ even let 
\[C_{n,n/2} \,=\,  \big\{\gamma\in \{0,1\}^n:\, |\gamma|=\frac{n}{2}\big\}\;,\]
 where for a string $\gamma$, $|\gamma|$ denotes its Hamming weight (number of nonzero entries).

\begin{lemma}\label{lem:permutations}
	Let $n$ be an even integer. Then there are $N = {n \choose n/2}$ mutually orthogonal permutations $\pi_1,\ldots,\pi_N$ of $C_{n,n/2}$
	such that the following holds. For each $k\in\{0,\ldots,\frac{n}{2}\}$ 
	there are exactly $\binom{\frac{n}{2}}{k}^2$ permutations $\pi_j$ such that the number of positions at which $\gamma$ and $\pi_j(\gamma)$ are both $1$ is $\frac{n}{2}-k$. 
\end{lemma}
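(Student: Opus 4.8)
The goal is to produce permutations that are simultaneously (i) mutually orthogonal and (ii) \emph{overlap-homogeneous}, meaning the quantity $o(\pi_j) := \big|\{i : \gamma_i = (\pi_j(\gamma))_i = 1\}\big|$ is the same for every $\gamma \in C_{n,n/2}$ --- without (ii) the classification of the $\pi_j$ by this quantity in the statement is not even well posed. Property (i) alone is easy: any bijection $C_{n,n/2} \cong \Z_N$ turns the $N$ translations of $\Z_N$ into a mutually orthogonal family, since translation by a nonzero element is fixed-point free and $\pi_i \circ \pi_j^{-1}$ is translation by the difference. Moreover, once (i) holds the family is automatically \emph{sharply transitive}: for fixed $\gamma$ the map $j \mapsto \pi_j(\gamma)$ is an injection of an $N$-element set into an $N$-element set, hence a bijection. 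Since the number of $\delta \in C_{n,n/2}$ with $|\{i : \gamma_i = \delta_i = 1\}| = \tfrac{n}{2} - k$ is exactly $\binom{n/2}{k}^2$ --- drop $k$ of the $\tfrac n2$ ones of $\gamma$ and raise $k$ of its $\tfrac n2$ zeros --- the \emph{count} in the lemma would follow for free from (i), \emph{provided} (ii) can be arranged at the same time. So the entire content is to build a mutually orthogonal family whose members are all overlap-homogeneous, the count and the Vandermonde identity $\sum_{k}\binom{n/2}{k}^2 = \binom{n}{n/2} = N$ then being bookkeeping.

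I would do this by a graph decomposition. For $0 \le w \le \tfrac n2$ let $\Gamma_w$ be the graph on vertex set $C_{n,n/2}$ in which $\gamma$ and $\delta$ are joined precisely when $|\{i : \gamma_i = \delta_i = 1\}| = w$, with $\Gamma_{n/2}$ consisting of all loops (overlap $\tfrac n2$ forces $\gamma = \delta$). These graphs partition all ordered pairs of vertices, and by the count above $\Gamma_{n/2 - k}$ is regular of degree $\binom{n/2}{k}^2$. A perfect matching ($1$-factor) of $\Gamma_w$ is precisely an overlap-homogeneous involution with overlap value $w$; here $\Gamma_{n/2}$ has the trivial $1$-factorization into the single ``all-loops matching'' (the identity permutation), and $\Gamma_0$ is itself a single perfect matching (the bitwise-complement map). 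Given a $1$-factorization of every $\Gamma_w$, the union of these $1$-factorizations consists of $\sum_k \binom{n/2}{k}^2 = N$ perfect matchings, pairwise edge-disjoint by construction; the corresponding involutions are mutually orthogonal, because if $\pi$ and $\pi'$ came from distinct matchings and $\pi(\gamma) = \pi'(\gamma) = \delta$, then the edge $\{\gamma,\delta\}$ (or the loop at $\gamma$) would lie in both; and exactly $\binom{n/2}{k}^2$ of the resulting permutations have overlap value $\tfrac n2 - k$. This yields the lemma.

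The crux --- and the step I expect to be the real obstacle --- is exhibiting an explicit $1$-factorization of $\Gamma_w$ for $0 < w < \tfrac n2$. Regularity and vertex-transitivity of $\Gamma_w$ do not suffice on their own (the Petersen graph is vertex-transitive and regular with no $1$-factorization), so one must use the finer combinatorics of $C_{n,n/2}$. The route I would pursue is to split the coordinate set $[n]$ into two halves of size $\tfrac n2$: the overlap statistic is additive across the halves, and each $\gamma \in C_{n,n/2}$ decomposes as $(\gamma_L, \gamma_R)$ with $|\gamma_L| + |\gamma_R| = \tfrac n2$. Grouping vertices by $|\gamma_L|$ and, on each such slab, assembling $1$-factors from $1$-factorizations of the much smaller overlap graphs on the two halves, one reduces the construction to the trivial case $n = 2$, where $C_{2,1} = \{10, 01\}$, $\Gamma_1$ is two loops (the identity) and $\Gamma_0$ is the single edge $\{10, 01\}$; the check that the degrees of the pieces recombine to $\binom{n/2}{k}^2$ is again Vandermonde's identity, and I would organize it as an induction on $n$. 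An alternative worth trying is to find an abelian group model of $C_{n,n/2}$ in which each $\Gamma_w$ is a Cayley graph with a connection set admitting a transparent $1$-factorization, but I expect the coordinate-splitting approach to be the more robust one.
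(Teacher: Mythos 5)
Your reduction of the count to ``overlap-homogeneity plus mutual orthogonality'' is correct, and your graphs $\Gamma_w$ are exactly the graphs $G_{n,k}$ (with $w=\tfrac n2 -k$) that the paper uses. But there is a genuine gap at the step you yourself flag as the crux: you demand a $1$-factorization of each $\Gamma_w$ into perfect matchings, so that every permutation produced is an involution. Nothing in the lemma requires the $\pi_j$ to be involutions, and by insisting on matchings you have replaced the problem with a strictly harder one that your argument does not solve. As you note, $\binom{n/2}{k}^2$-regularity and vertex-transitivity do not imply $1$-factorability, and the coordinate-splitting induction is only sketched: the slabs $\{\gamma : |\gamma_L|=j\}$ are not of the form $C_{m,m/2}$ for $j\neq n/4$, so the stated induction hypothesis does not apply to them, and an edge of $\Gamma_w$ need not preserve $|\gamma_L|$ (it can move ones between the two halves), so a $1$-factor cannot in general be assembled slab by slab. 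Whether these uniform-subset graphs are $1$-factorable is a nontrivial question that your write-up leaves open.

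The fix is to drop the involution requirement. Replace each edge of $\Gamma_w$ by two opposite arcs; the resulting digraph has in-degree and out-degree exactly $d_k=\binom{n/2}{k}^2$ at every vertex, and any such digraph decomposes into $d_k$ arc-disjoint oriented vertex cycle covers (a cycle cover exists whenever all in- and out-degrees are equal and positive, by Hall's theorem applied to the bipartite graph with one edge per arc $u\to v$; remove it and repeat). Each cycle cover is a permutation, overlap-homogeneous with value $\tfrac n2-k$ by construction, and distinct cycle covers --- from the same or different $k$ --- give orthogonal permutations because they are arc-disjoint and an arc lies in at most one $\Gamma_w$. Together with the identity for $k=0$, Vandermonde gives the full family of $\binom{n}{n/2}$ permutations, exactly as in your bookkeeping. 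This is the paper's argument; everything in your proposal except the $1$-factorization step survives unchanged.
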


\begin{proof}
Fix $n\geq 2$ an even integer, and let $k\in \{1,\ldots,\frac{n}{2}\}$. Let $G_{n,k}$ be the graph with vertex set $C_{n,n/2}$ and an edge between any $\gamma,\gamma'\in C_{n,n/2}$ such that the number of positions at which $\gamma$ and $\gamma'$ are both $1$ is exactly $\frac{n}{2}-k$.

We claim that the minimum degree $d_k$ of $G_{n,k}$ is at least  $\binom{\frac{n}{2}}{k}^2$. Indeed, for any $\gamma\in C_{n,n/2}$ we can define distinct $\gamma'$ that are connected to it in $G_{n,k}$ by choosing $k$ locations among the $\frac{n}{2}$ $1$ positions of $\gamma$, $k$ locations among the $\frac{n}{2}$ $0$ positions, and flipping those values. 

For each edge in $G_{n,k}$ create two directed edges to obtain a directed graph $\tilde{G}_{n,k}$. In $\tilde{G}_{n,k}$ each vertex has in-degree at least $d_k$, and out-degree at least $d_k$. Thus we can find $d_k$ non-overlapping oriented vertex cycle covers of $\tilde{G}_{n,k}$, call them $c_{k,1},\ldots,c_{k,d_k}$.\footnote{To show this, find a first cycle cover in an arbitrary way and remove all edges used. This reduces both the out- and in-degrees by exactly $1$. Repeat until the minimum degree reaches zero.} To each such oriented vertex cycle cover associate a permutation $\pi_{k,i}$ of $C_{n,n/2}$ in the natural way. By construction for any $i\neq i'$, $\pi_{k,i}$ and $\pi_{k,i'}$ are orthogonal. 

For $k=0$, set $\pi_{0,1}$ to be the identity permutation of $C_{n,n/2}$. 

We observe that for $k\neq k'$ and any $i,i'$ it must be that $\pi_{k,i}$ and $\pi_{k',i'}$ are orthogonal permutations. This is because two elements of $C_{n,n/2}$ can be connected by an edge in at most one $G_{n,k}$. To conclude, use that by the Vandermonde identity we have found a total of 
	\begin{equation*}
	N \,=\, {n\choose n/2} \,=\, \sum_{k=0}^{n/2}{n/2 \choose k}^2
	\end{equation*}
	mutually orthogonal permutations, as desired. 
\end{proof}

\begin{figure}
	\centering
	\begin{subfigure}[b]{0.3\textwidth}
		\centering	
		\begin{tikzpicture}
		
		\draw (0,0) -- (0.866,0.5) -- (0.866,1.5) -- (-0.866, 1.5) -- (-0.866,0.5) -- (0.866,0.5) -- (0,2) -- (-0.866,1.5) -- (0,0) -- (0.866,1.5) -- (-0.866,0.5) -- (0,2) -- (0,0);
		
		\fill (0,0) circle (2pt) node[below]{$0011$};
		\fill (0.866,0.5) circle (2pt) node[below right]{$0101$};
		\fill (0.866,1.5) circle (2pt) node[above right]{$0110$};
		\fill (0,2) circle (2pt) node[above]{$1001$};
		\fill (-0.866,1.5) circle (2pt) node[above left]{$1010$};
		\fill (-0.866,0.5) circle (2pt) node[below left]{$1100$};
		\end{tikzpicture}
		\caption{The graph $G_{4,1}$.}
	\end{subfigure}
	\hfill
	\begin{subfigure}[b]{0.3\textwidth}
		\centering	
		\begin{tikzpicture}[scale=1.3]
		
		\begin{scope}[decoration={
			markings,
			mark=at position 0.5 with {\arrow{latex}}}
		] 
		\draw[postaction=decorate] (0,0) to [bend right=15] (0.866,0.5);
		\draw[postaction=decorate] (0.866,0.5) to [bend right=15] (0,0);
		\draw[postaction=decorate] (0,0) to [bend right=15] (0.866,1.5);
		\draw[postaction=decorate] (0.866,1.5) to [bend right=15] (0,0);
		\draw[postaction=decorate] (0,0) to [bend right=15] (0,2);
		\draw[postaction=decorate] (0,2) to [bend right=15] (0,0);
		\draw[postaction=decorate] (0,0) to [bend right=15] (-0.866,1.5);
		\draw[postaction=decorate] (-0.866,1.5) to [bend right=15] (0,0);
		\draw[postaction=decorate] (0.866,0.5) to [bend right=15] (-0.866,0.5);
		\draw[postaction=decorate] (-0.866,0.5) to [bend right=15] (0.866,0.5);
		\draw[postaction=decorate] (0.866,0.5) to [bend right=15] (0,2);
		\draw[postaction=decorate] (0,2) to [bend right=15] (0.866,0.5);
		\draw[postaction=decorate] (0.866,0.5) to [bend right=15] (0.866,1.5);
		\draw[postaction=decorate] (0.866,1.5) to [bend right=15] (0.866,0.5);
		\draw[postaction=decorate] (0.866,1.5) to [bend right=15] (-0.866,0.5);
		\draw[postaction=decorate] (-0.866,0.5) to [bend right=15] (0.866,1.5);
		\draw[postaction=decorate] (0.866,1.5) to [bend right=15] (-0.866,1.5);
		\draw[postaction=decorate] (-0.866,1.5) to [bend right=15] (0.866,1.5);
		\draw[postaction=decorate] (0,2) to [bend right=15] (-0.866,0.5);
		\draw[postaction=decorate] (-0.866,0.5) to [bend right=15] (0,2);
		\draw[postaction=decorate] (0,2) to [bend right=15] (-0.866,1.5);
		\draw[postaction=decorate] (-0.866,1.5) to [bend right=15] (0,2);
		\draw[postaction=decorate] (-0.866,0.5) to [bend right=15] (-0.866,1.5);
		\draw[postaction=decorate] (-0.866,1.5) to [bend right=15] (-0.866,0.5);
		\end{scope}
		
		\end{tikzpicture}
		\caption{The graph $\tilde{G}_{4,1}$.}
	\end{subfigure}
\hfill
\begin{subfigure}[b]{0.3\textwidth}
\centering	
\begin{tikzpicture}[scale=0.6]

	\begin{scope}[decoration={
		markings,
		mark=at position 0.7 with {\arrow{latex}}}
	]
		\draw[postaction=decorate] (2.5,0) -- (3.366,0.5);
		\draw[postaction=decorate] (3.366,0.5) -- (1.634,0.5);
		\draw[postaction=decorate] (3.366,1.5) -- (2.5,0);
		\draw[postaction=decorate] (2.5,2) -- (1.634,1.5);
		\draw[postaction=decorate] (1.634,1.5) -- (3.366,1.5);
		\draw[postaction=decorate] (1.634,0.5) -- (2.5,2);
	\end{scope}
	
	\begin{scope}[decoration={
		markings,
		mark=at position 0.7 with {\arrow{latex}}}
	]
		\draw[postaction=decorate] (0,2.5) -- (0.866,4);
		\draw[postaction=decorate] (0.866,3) -- (0,2.5);
		\draw[postaction=decorate] (0.866,4) -- (0.866,3);
		\draw[postaction=decorate] (0,4.5) -- (-0.866,3);
		\draw[postaction=decorate] (-0.866,4) -- (0,4.5);
		\draw[postaction=decorate] (-0.866,3) -- (-0.866,4);
	\end{scope}
	
	\begin{scope}[decoration={
		markings,
		mark=at position 0.7 with {\arrow{latex}}}
	]
		\draw[postaction=decorate] (2.5,2.5) to [bend right=15] (2.5,4.5);
		\draw[postaction=decorate] (2.5,4.5) to [bend right=15] (2.5,2.5);
		\draw[postaction=decorate] (3.366,3) -- (3.366,4);
		\draw[postaction=decorate] (3.366,4) -- (1.634,4);
		\draw[postaction=decorate] (1.634,4) -- (1.634,3);
		\draw[postaction=decorate] (1.634,3) -- (3.366,3);
	\end{scope}
	
	\begin{scope}[decoration={
		markings,
		mark=at position 0.7 with {\arrow{latex}}}
	] 
	
		\draw[postaction=decorate] (0.866,1.5) to [bend right=15] (-0.866,0.5);
		\draw[postaction=decorate] (0,2) to [bend right=15] (0.866,0.5);
		\draw[postaction=decorate] (0,0) to [bend right=15] (-0.866,1.5);
		\draw[postaction=decorate] (0.866,0.5) to [bend right=15] (0,2);
		\draw[postaction=decorate] (-0.866,0.5) to [bend right=15] (0.866,1.5);
		\draw[postaction=decorate] (-0.866,1.5) to [bend right=15] (0,0);
	\end{scope}
		\end{tikzpicture}
		\caption{A set of cycle covers.}
	\end{subfigure}
	\caption{Graph construction from Lemma~\ref{lem:permutations} in the case of $n=4,k=1$.}
	\label{fig:graphconstruction}
\end{figure}
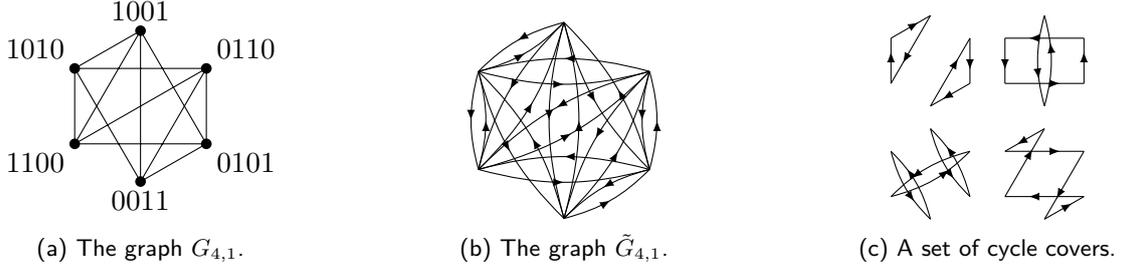

\begin{remark}
For any set $X$ of cardinality $n$, we can consider the $\pi_j$ as permutations on the collection of subsets of size $\tfrac{n}{2}$, instead of permutations of $C_{n,n/2}$. We do this by fixing an ordering of the elements of $X$, and referring to the subsets by their indicator strings. The set of indicator strings is $C_{n,n/2}$, on which $\pi_j$ acts. Below, we apply this remark where the set $X$ in question is a basis of $\F_2^n$, and the subsets are bases of subspaces of dimension $\tfrac{n}{2}$.
\end{remark}

We are ready to complete our proof of the upper bound on the winning probability of the coset-monogamy game.

\begin{proof}[Proof of Theorem~\ref{thm:coset}]
Fix a strategy for the coset-monogamy game, consisting of a channel $\Phi:\mH_\reg{A} \to \mH_\reg{B} \otimes \mH_{\reg{C}}$ and, for each $A\in\textsf{G}\parens*{\tfrac{n}{2},n}$, POVMs $\{B^A_s\}_{s\in\textsf{CS}(A)}$ for Bob and  $\{C^A_{s'}\}_{s'\in\textsf{CS}(A^\perp)}$ for Charlie. Let $q'_n$ be the probability that this strategy succeeds in the game. Using Naimark's theorem as in Lemma 9 of \cite{tomamichel2013monogamy}, we may assume that the POVMs are projective. Using Lemma~\ref{lem:correspondence},
	\begin{align*}
	q_n'&=\Es{A\in \textsf{G}\parens*{\tfrac{n}{2},n}}\sum_{\substack{s\in\textsf{CS}(A)\\s'\in\textsf{CS}(A^\perp)}}\Tr\big(\big(\proj{A_{s,s'}}\otimes B^A_s\otimes C^A_{s'}\big)\rho\big)\\
	&\leq\Big\|\Es{A\in \textsf{G}\parens*{\tfrac{n}{2},n}}\Pi^A\Big\|\;,
	\end{align*}
	where 
	\[\Pi^A=\sum_{\substack{s\in\textsf{CS}(A)\\s'\in\textsf{CS}(A^\perp)}}\proj{A_{s,s'}}\otimes B^A_s\otimes C^A_{s'}\;.\]
	As in \cite{coladangelo2021hidden} we  decompose the average over the subspaces followed by an average over bases of $\mathbb{F}_2^n$, and then over subspaces that may be spanned by $\frac{n}{2}$ vectors from the basis. Using the triangle inequality we can bound the winning probability as
	\begin{align}
	q_n'&\leq\Es{\beta\text{ basis of }\mathbb{F}_2^n}\Big\|\Es{\substack{\gamma\subseteq\beta\\|\gamma|=\frac{n}{2}}}\Pi^{\mathrm{span}(\gamma)}\Big\|\;.
	\end{align}
	We apply Lemma~\ref{lem:sum-bound} using the permutations $\pi_1,\ldots,\pi_N$ from Lemma~\ref{lem:permutations}, where $N={n\choose n/2}$. Applying the lemma, 
		\begin{align}
		q_n' &\leq\Es{\beta\text{ basis of }\mathbb{F}_2^n}\frac{1}{N}\sum_{j=1}^N\max_{\substack{\gamma\subseteq\beta\\|\gamma|=\frac{n}{2}}}\big\|\Pi^{\mathrm{span}(\gamma)}\Pi^{\mathrm{span}(\pi_{j}(\gamma))}\big\|\;.\label{eq:cosett-1}
	\end{align}
	For any subspaces $A,B\in \textsf{G}\parens*{\tfrac{n}{2},n}$ define the projectors
	\begin{align}
	P=\sum_{\substack{s\in\textsf{CS}(A)\\s'\in\textsf{CS}(A^\perp)}}\proj{A_{s,s'}}\otimes\Id_\textsf{B}\otimes C^A_{s'}\qquad\text{and}\qquad Q=\sum_{\substack{s\in\textsf{CS}(B)\\s'\in\textsf{CS}(B^\perp)}}\proj{B_{s,s'}}\otimes B^B_s\otimes \Id_\textsf{C}\;,
	\end{align}
	which satisfy $\Pi^A\leq P$ and $\Pi^B\leq Q$. Thus
	\begin{align}
	\norm{\Pi^A\Pi^B}^2=\sup_{\ket{v}}\bra{v}\Pi^B\Pi^A\Pi^B\ket{v}=\sup_{\ket{v}\in\supp(\Pi_B)}\bra{v}\Pi^A\ket{v}\leq\sup_{\ket{v}\in\supp (Q)}\bra{v}P\ket{v}=\norm{PQ}^2\;,
	\end{align}
	and using Lemma~\ref{lem:overlaps},
	\begin{align}
	\norm{\Pi^A\Pi^B}&\leq\Big\|\sum_{\substack{(s,s')\in\textsf{CS}(A)\times\textsf{CS}(A^\perp)\\(t,t')\in\textsf{CS}(B)\times\textsf{CS}(B^\perp)}}\ket{A_{s,s'}}\!\bra{A_{s,s'}}\cdot\ket{B_{t,t'}}\!\bra{B_{t,t'}}\otimes B^B_t\otimes C^A_{s'}\Big\|\notag\\
	&=\max_{\substack{s'\in\textsf{CS}(A^\perp)\\t\in\textsf{CS}(B)}}\Big\|\sum_{s\in\textsf{CS}(A)}\proj{A_{s,s'}}\sum_{t'\in\textsf{CS}(B^\perp)}\proj{B_{t,t'}}\Big\|\notag\\
	&\leq \sqrt{2^{\dim(A\cap B)-\frac{n}{2}}}\;.\label{eq:cosett-2}
	\end{align}
	By Lemma~\ref{lem:permutations} for $k\in \{0,\ldots,\frac{n}{2}\}$ there are $\binom{n/2}{k}^2$ permutations $\pi_j$ such that the dimension of $\mathrm{span}(\gamma)\cap\mathrm{span}(\pi_{j}(\gamma))$ is $\frac{n}{2}-k$.  Plugging~\eqref{eq:cosett-2} back into~\eqref{eq:cosett-1} we thus get
	\begin{align*}
	q_n'&\leq\Es{\beta\text{ basis of }\mathbb{F}_2^n}\frac{1}{N}\sum_{j=1}^{N}\max_{\substack{\gamma\subseteq\beta\\|\gamma|=\frac{n}{2}}}\sqrt{2^{\dim(\mathrm{span}(\gamma)\cap\mathrm{span}(\pi_{j}(\gamma)))-\frac{n}{2}}}\\
	&\leq\frac{1}{\binom{n}{n/2}}\sum_{k=0}^{n/2}\binom{n/2}{k}^2\sqrt{2^{-k}}\;.
	\end{align*}
	The final bound is provided by Lemma~\ref{lem:binomial} stated below. 
	\end{proof}

\begin{lemma}\label{lem:binomial}
For any even integer $n\geq 2$,
\[ \frac{1}{\binom{n}{n/2}}\sum_{k=0}^{n/2}\binom{n/2}{k}^2\sqrt{2^{-k}}\,\leq\, \sqrt{e}\parens*{\cos\frac{\pi}{8}}^n\;.  \]
\end{lemma}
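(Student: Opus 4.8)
The plan is to bound $\binom{n/2}{k}$ by $\binom{n/2}{k}\le\binom{n}{n/2}\cdot 2^{-n/2}\cdot(\text{something})$... actually, more cleanly, I would first rewrite the left-hand side using the exact identity from the Vandermonde/Chu–Vandermonde convolution. Recall that $\sum_{k=0}^{n/2}\binom{n/2}{k}^2 = \binom{n}{n/2}$, so the expression $\frac{1}{\binom{n}{n/2}}\sum_{k=0}^{n/2}\binom{n/2}{k}^2 x^k$ is exactly the expectation $\Es{}[x^K]$ where $K$ is the random variable equal to the size of the intersection $|S\cap T|$ when $S,T$ are two independent uniformly random subsets of $\{1,\dots,n/2\}$... no: more precisely, $K$ is distributed as a hypergeometric-type variable with $\Pr[K=k] = \binom{n/2}{k}^2/\binom{n}{n/2}$. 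So with $x=1/\sqrt 2$, the left-hand side is $\E[2^{-K/2}]$.

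Next I would identify the distribution of $K$: drawing $\gamma$ and $\pi_j(\gamma)$ is the same as drawing two random $(n/2)$-subsets of an $n$-set, and $K = n/2 - (\text{their intersection count in the ``1'' positions})$... in any case, $\Pr[K=k]=\binom{n/2}{k}^2/\binom{n}{n/2}$ is the law of a hypergeometric random variable. I would then compute the moment generating function $\E[t^K]$ in closed form via Gauss's hypergeometric evaluation, or alternatively bound it directly. The cleanest route: write $K = \sum_{i=1}^{n} X_i$ as a sum of $0/1$ indicators that are \emph{negatively associated} (the standard fact for sampling without replacement / the intersection of two random equal-size subsets), each with mean $\E[X_i]$. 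Negative association gives $\E\big[\prod_i t^{X_i}\big]\le \prod_i \E[t^{X_i}]$ for $t\in(0,1)$, and since each $X_i$ has the same marginal (mean roughly $1/4$), this yields $\E[2^{-K/2}]\le \big(1 - c(1-2^{-1/2})\big)^{n}$ for the appropriate constant $c$; one then checks that the resulting base equals $\cos\frac\pi 8$ up to the $\sqrt e$ slack, using $\cos^2\frac\pi 8 = \frac{1}{2}+\frac{1}{2\sqrt2} = \frac{2+\sqrt 2}{4}$.

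A more elementary and self-contained alternative, which I would likely prefer in the write-up, avoids negative association entirely: use the crude bound $\binom{n/2}{k}^2 \le \binom{n/2}{k}\binom{n/2}{n/2-k}$ trivially (equality), and instead bound $\frac{1}{\binom{n}{n/2}}\sum_k \binom{n/2}{k}^2 x^k$ by completing it to a full binomial sum. Specifically, $\sum_{k=0}^{n/2}\binom{n/2}{k}^2 x^k$ is the coefficient extraction $[y^{n/2}]\,(1+y)^{n/2}(1+xy)^{n/2}$, and one can bound this coefficient by $\big((1+\sqrt x)^2\big)^{n/2}\cdot\max_y(\dots)$ — more precisely, for any $r>0$, $[y^{n/2}](1+y)^{n/2}(1+xy)^{n/2} \le r^{-n/2}(1+r)^{n/2}(1+xr)^{n/2}$ by positivity of coefficients. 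Optimizing $r$ (take $r = 1/\sqrt x$, giving $(1+\sqrt x)^n x^{-n/4}$... wait, need to recompute: with $r=x^{-1/2}$ one gets $x^{n/4}(1+x^{-1/2})^{n/2}(1+x^{1/2})^{n/2} = (x^{1/4}+x^{-1/4})^{n/2}\cdot x^{?}$, hmm, let me just say: the optimal $r$ makes the bound $(2\sqrt x + x + 1)^{n/2}$ up to lower-order factors). Combined with the standard estimate $\binom{n}{n/2}\ge \frac{2^n}{\sqrt{\pi n/2}}\ge \frac{2^n}{\sqrt{e}\cdot 2^{?}}$... here I'd use $\binom{n}{n/2}\ge 2^n/\sqrt{e n}$ is false in that direction; I actually want a \emph{lower} bound on $\binom{n}{n/2}$, and $\binom{n}{n/2}\ge \frac{2^n}{n+1}$ is too weak, so I should use $\binom{n}{n/2}\ge \frac{2^n}{\sqrt{2n}}$ and absorb the polynomial factor into $\sqrt e$ only for large $n$, checking small $n$ by hand. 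Putting $x=1/\sqrt2$: the base becomes $\frac{1}{2}\big(1+2\cdot 2^{-1/4}+2^{-1/2}\big)^{1/2}\cdot\frac12 = \cos^2\frac\pi8$-ish.

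\textbf{The main obstacle} I anticipate is handling the polynomial prefactor cleanly: the exponential rates match $(\cos\frac\pi8)^n$ on the nose, but the stray $\mathrm{poly}(n)$ factors coming from Stirling bounds on $\binom{n}{n/2}$ and from the coefficient-extraction estimate must be shown to be at most $\sqrt e$ \emph{for all even $n\ge 2$}, not just asymptotically. The trick will be to not use Stirling at all but rather the sharp telescoping/induction bound $\binom{2m}{m}\ge \frac{4^m}{\sqrt{4m}}$ together with the identity $\cos^2\frac\pi8=\frac{2+\sqrt2}{4}$, verify the inequality directly for a couple of small values of $n$, and for larger $n$ observe that the ratio of the left-hand side to $(\cos\frac\pi8)^n$ is bounded by a convergent expression whose supremum is $\le\sqrt e$. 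An even slicker approach: prove $\E[2^{-K/2}]\le e^{(2^{-1/2}-1)\E[K]/\text{const}}$-type exponential bound directly from a convexity/Hoeffding argument on the hypergeometric $K$, where $\E[K]=n/4$ and $2^{-1/2}-1 = -(1-\cos\frac\pi8)\cdot(\text{adjust})$; then the constant $\sqrt e$ appears naturally from the variance correction term in the Hoeffding-type inequality for sampling without replacement. I would pursue this last route first, as it most transparently produces exactly the claimed form $\sqrt e\,(\cos\frac\pi8)^n$.
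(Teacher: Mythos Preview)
Your proposal explores several routes without settling on one, and none of them is the paper's. The paper's argument is far more elementary than anything you sketch: it simply bounds one of the two factors $\binom{n/2}{k}$ by the central term $\binom{n/2}{\lfloor n/4\rfloor}$, pulls that constant out of the sum, and the remaining sum collapses by the binomial theorem to $(1+2^{-1/2})^{n/2}$. The ratio $\binom{n/2}{\lfloor n/4\rfloor}/\binom{n}{n/2}$ is then written as an explicit telescoping product and bounded by $\sqrt{e}\cdot 2^{-n/2}$; this is exactly where the $\sqrt e$ appears, and no Stirling, no small-$n$ case check, and no asymptotics are needed. The identity $\cos^2\tfrac{\pi}{8}=(1+2^{-1/2})/2$ finishes.

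That said, your negative-association idea, once the bookkeeping is fixed, actually \emph{beats} the paper's bound. The correct decomposition is $K=\sum_{i=1}^{n/2} X_i$ (not $\sum_{i=1}^{n}$): model $K$ as the number of ``white'' items in a uniformly random $(n/2)$-subset of an $n$-set containing $n/2$ white items, so each $X_i$ is Bernoulli$(1/2)$ and the family $(X_i)$ is negatively associated. Since $t\mapsto t^x$ is nonincreasing for $t\in(0,1)$, NA gives
\[
\E\big[2^{-K/2}\big]\;\le\;\prod_{i=1}^{n/2}\E\big[2^{-X_i/2}\big]\;=\;\Big(\tfrac{1+2^{-1/2}}{2}\Big)^{n/2}\;=\;\big(\cos\tfrac{\pi}{8}\big)^n,
\]
i.e.\ the lemma holds without the $\sqrt e$ factor. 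So your worry about ``up to the $\sqrt e$ slack'' was misplaced: with the right indicator count there is no slack at all. This route is genuinely different from (and sharper than) the paper's, at the cost of invoking the NA machinery rather than a two-line elementary estimate.

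Your coefficient-extraction and Hoeffding sketches, by contrast, do run into the prefactor problem you anticipated (the optimal saddle-point bound gives base $(1+2^{-1/4})/2<\cos\tfrac{\pi}{8}$ but with a $\sqrt{n}$-type factor that is not uniformly $\le\sqrt e$), and are unnecessary given either of the two clean arguments above.
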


\begin{proof}
We bound $\binom{n/2}{k}\leq\binom{n/2}{n/4}$ for any $k\in \{0,\ldots,\frac{n}{2}\}$ and 
	\begin{align*}
	\frac{\binom{n/2}{n/4}}{\binom{n}{n/2}}&=\frac{1}{2^\frac{n}{2}}\parens*{1+\frac{1}{n-1}}\parens*{1+\frac{1}{n-3}}\cdots \parens*{1+\frac{1}{\frac{n}{2}+1}}\leq\frac{1}{2^\frac{n}{2}}\parens*{1+\frac{1}{\frac{n}{2}+1}}^{\frac{n}{4}}\\
	&\leq\frac{\sqrt{e}}{2^\frac{n}{2}}\;,
	\end{align*}
	which gives
	\begin{align*}
	\frac{1}{\binom{n}{n/2}}\sum_{k=0}^{n/2}\binom{n/2}{k}^2\sqrt{2^{-k}}&\leq\frac{\binom{n/2}{n/4}}{\binom{n}{n/2}}\sum_{k=0}^{n/2}\binom{n/2}{k}\sqrt{2^{-k}}\\
	&\leq\frac{\sqrt{e}}{2^\frac{n}{2}}\parens*{1+\frac{1}{\sqrt{2}}}^\frac{n}{2}\\
	&=\sqrt{e}\parens*{\cos\frac{\pi}{8}}^n\;,
	\end{align*}
	as claimed.
\end{proof}

\section{The basis-monogamy game}
\label{sec:basis-game}

In this section we introduce a monogamy game which we call the \emph{basis-monogamy game}. While this game is conceptually simpler than the coset-monogamy game introduced in Section~\ref{sec:coset-game}, in the next section we will show that the latter can be reduced to the former. Here we focus on the basis-monogamy game, which may be of independent interest, and its analysis. 

We formulate the game directly as an extended nonlocal game, that can be seen as a variant of a game introduced in~\cite{tomamichel2013monogamy}. Informally, in the game from~\cite{tomamichel2013monogamy} two players Bob and Charlie are trying to both be maximally entangled with Alice: they are required to prepare a tripartite state $\rho_{\reg{ABC}}$, where $\reg{A}$ is an $n$-qubit register handed over to Alice, and $\reg{B}$ and $\reg{C}$ are arbitrary registers kept by Bob and Charlie respectively, such that when Alice measures her $n$ qubits in a randomly chosen basis $\theta\in\{0,1\}^n$ (where as usual $\theta_i=0$ denotes a measurement in the standard basis, and $\theta_i=1$ a measurement in the Hadamard basis) to obtain a string of outcomes $x\in\{0,1\}^n$, given $\theta$ as side information Bob and Charlie are able to return strings $y,z\in\{0,1\}^n$ respectively such that $x=y=z$. Our variant of the game introduces two simple modifications: first, $n$ is even and $\theta$ is chosen such that $|\theta| =  \frac{n}{2}$, and second, Bob and Charlie are only asked to predict measurement outcomes associated with the standard basis ($\theta_i=0$) and Hadamard basis ($\theta_i=1$), respectively. More formally, for $n$ an even integer the basis-monogamy game proceeds as follows.  

\bigskip

\underline{Basis-monogamy game.}
\begin{enumerate}
\item \emph{Preparation:} Bob and Charlie together prepare a state $\rho_{\reg{ABC}}$ such that $\reg{A}$ is an $n$-qubit register and $\reg{B}$ and $\reg{C}$ are arbitrary. They pass $\reg{A}$ to Alice and keep registers $\reg{B}$ and $\reg{C}$ to themselves, respectively. 
\item \emph{Question:} Alice chooses $\theta \in \{0,1\}^n$ uniformly at random conditioned on $|\theta|=\frac{n}{2}$. Alice measures each qubit of $\reg{A}$ in the basis indicated by $\theta$ to obtain a string of outcomes $x\in\{0,1\}^n$. She sends $\theta$ to Bob and Charlie. Let $T = \{ i\in\{1,\ldots,n\}:\, \theta_i = 0\}$. 
\item \emph{Answer:} Bob returns a string $y\in \{0,1\}^T$. Charlie returns a string $z \in \{0,1\}^{\ol{T}}$. 
\item \emph{Winning condition:} Bob and Charlie win if and only if $y=x_T$ and $z=x_{\ol{T}}$. 
\end{enumerate}

Naturally this game is slightly easier than the one considered in~\cite{tomamichel2013monogamy}. Nevertheless we can use the same proof technique to bound the maximum success probability and obtain the following result.

\begin{theorem}\label{thm:basis-game}
Let $n\geq 1$ be an even integer. 
Let $p_n$ be Bob and Charlie's maximum probability of winning in the basis-monogamy game. Then 
\[p_n \,\leq\, \sqrt{e}\parens*{\cos\frac{\pi}{8}}^n\;.\]
\end{theorem}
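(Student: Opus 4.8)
The plan is to mirror the structure of the direct proof of Theorem~\ref{thm:coset} almost verbatim, replacing coset states by BB'84 states. First I would fix a strategy for the basis-monogamy game: a state $\rho_{\reg{ABC}}$ together with, for each admissible $\theta\in\{0,1\}^n$ with $|\theta|=\frac n2$, projective measurements $\{B^\theta_y\}_{y\in\{0,1\}^T}$ for Bob and $\{C^\theta_z\}_{z\in\{0,1\}^{\ol T}}$ for Charlie (projectivity is free by Naimark, exactly as in~\cite{tomamichel2013monogamy}). Writing the success probability as an average over $\theta$ of $\Tr\big((\Pi^\theta)\rho\big)$ where
\[\Pi^\theta \,=\, \sum_{x\in\{0,1\}^n}\ket{x}_\theta\!\bra{x}_\theta\otimes B^\theta_{x_T}\otimes C^\theta_{x_{\ol T}}\;,\]
one bounds $p_n\leq\big\|\Es{\theta}\Pi^\theta\big\|$ since each $\Pi^\theta$ is a projector (the $\ket{x}_\theta$ form an orthonormal basis and the $B,C$ parts are consistent).

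Next I would apply Lemma~\ref{lem:sum-bound}. The index set here is $C_{n,n/2}$, the set of weight-$\frac n2$ strings $\theta$, which has size $N=\binom{n}{n/2}$, and Lemma~\ref{lem:permutations} supplies $N$ mutually orthogonal permutations $\pi_1,\dots,\pi_N$ of $C_{n,n/2}$ with the stated overlap profile. So $p_n\leq \frac1N\sum_{j=1}^N \max_{\theta}\big\|\Pi^\theta\Pi^{\pi_j(\theta)}\big\|$. Then, exactly as in the direct proof, I would relax $\Pi^\theta$ and $\Pi^{\theta'}$ to the projectors that only enforce Charlie's (resp.\ Bob's) answer — $P=\sum_x \ket{x}_\theta\!\bra{x}_\theta\otimes\Id_{\reg B}\otimes C^\theta_{x_{\ol T}}$ and $Q=\sum_x \ket{x}_{\theta'}\!\bra{x}_{\theta'}\otimes B^{\theta'}_{x_{T'}}\otimes\Id_{\reg C}$ — and use the variational characterization of $\|\Pi^\theta\Pi^{\theta'}\|$ to reduce to $\|PQ\|$. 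Tracing out the $\reg B,\reg C$ parts, $\|PQ\|$ is controlled by an overlap of the form $\big\|\sum_{x_T}\ket{x}_\theta\!\bra{x}_\theta \cdot \sum_{x'_{\ol{T'}}}\ket{x'}_{\theta'}\!\bra{x'}_{\theta'}\big\|$; this is the BB'84 analogue of Lemma~\ref{lem:overlaps}. The key computation is that $\sum_{x_T}\ket{x}_\theta\!\bra{x}_\theta$ (partial sum over the standard-basis coordinates, fixing the Hadamard ones) is a rank-$2^{|T|}$ projector, and likewise for the primed one, and the two projectors overlap only in the coordinates where $\theta$ and $\theta'$ disagree; concretely the squared operator norm of the product is $2^{-m}$ where $m$ is the number of coordinates at which $\theta_i=\theta'_i=1$ — i.e.\ the quantity governed by Lemma~\ref{lem:permutations}. (One clean way to see this: apply the coordinatewise unitary $H^{\theta'}$ to reduce $\theta'$ to the all-zeros basis, so $\sum_{x'_{\ol{T'}}}\ket{x'}_{\theta'}\!\bra{x'}_{\theta'}$ becomes a standard-basis coordinate projector and the product becomes a product of single-qubit projectors, $\ket{0}\!\bra{0}$ type on the agreeing-$1$ coordinates contributing factors of $\frac12$, identity elsewhere.)

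Combining these pieces, $\|\Pi^\theta\Pi^{\pi_j(\theta)}\|\leq \sqrt{2^{-k}}$ whenever $\theta$ and $\pi_j(\theta)$ share $\frac n2-k$ ones, and Lemma~\ref{lem:permutations} says there are exactly $\binom{n/2}{k}^2$ values of $j$ with this property for each $k\in\{0,\dots,\frac n2\}$. Hence
\[p_n \,\leq\, \frac{1}{\binom{n}{n/2}}\sum_{k=0}^{n/2}\binom{n/2}{k}^2\sqrt{2^{-k}}\,\leq\,\sqrt{e}\parens*{\cos\tfrac{\pi}{8}}^n\;,\]
the last inequality being exactly Lemma~\ref{lem:binomial}. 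The main obstacle — such as it is — is the overlap lemma for BB'84 partial-sum projectors: one must be careful that the partial sums over the ``wrong'' coordinates really do yield orthogonal-range families (so that the $\|\sum X_s\|\le\max_s\|X_s\|$ step of Lemma~\ref{lem:overlaps} goes through) and that the count of shared $1$-coordinates, rather than shared $0$-coordinates or total agreements, is the right exponent. Everything else is a transcription of the direct proof, which is unsurprising given that coset states for basis-aligned subspaces are literally BB'84 states.
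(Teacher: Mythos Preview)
Your outline matches the paper's proof almost exactly: define $\Pi^\theta$, bound $p_n'\le\big\|\Es{\theta}\Pi^\theta\big\|$, apply Lemma~\ref{lem:sum-bound} with the permutations of Lemma~\ref{lem:permutations}, relax each $\Pi^\theta,\Pi^{\theta'}$ to projectors enforcing only one player's answer, and finish with Lemma~\ref{lem:binomial}. The only cosmetic differences are that the paper keeps Bob in its relaxed $\ol P$ and Charlie in its $\ol Q$ (you do the reverse), and that the paper restricts the Alice register to a subset $S$ of the disagreement coordinates and computes $\ol P\,\ol Q\,\ol P$ directly rather than factoring coordinatewise.

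There is, however, a slip in your overlap step. With $P_{x_{\ol T}}=\sum_{x_T}\proj{x}_\theta$ (fixing $x_{\ol T}$) and $Q_{x'_{T'}}=\sum_{x'_{\ol{T'}}}\proj{x'}_{\theta'}$ (fixing $x'_{T'}$), the per-coordinate factors are: identity on $T$ and a Hadamard-basis rank-$1$ projector on $\ol T$ for the first; a computational-basis rank-$1$ projector on $T'$ and identity on $\ol{T'}$ for the second. The only coordinates where a Hadamard-basis projector meets a computational-basis projector are those in $\ol T\cap T'$, i.e.\ where $\theta_i=1$ and $\theta'_i=0$ --- a \emph{disagreement} set, not the set $\{i:\theta_i=\theta'_i=1\}$ you name. (In your ``clean way'': after conjugating by $H^{\theta'}$, on an agreeing-$1$ coordinate the first factor becomes computational while the second is identity, so no factor of $\tfrac12$ arises there.) Since for $\theta,\theta'\in C_{n,n/2}$ sharing $\tfrac n2-k$ ones one has $|\ol T\cap T'|=k$, the squared norm is indeed $2^{-k}$ and your final bound $\|\Pi^\theta\Pi^{\theta'}\|\le\sqrt{2^{-k}}$ is correct --- but it does not follow from taking $m$ to be the number of shared ones, which would give $m=\tfrac n2-k$ and is inconsistent with what you then write. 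With this identification corrected the rest of your argument goes through verbatim.
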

 
\begin{proof}
The proof follows very closely the proof of~\cite[Theorem 3]{tomamichel2013monogamy}.  Fix an arbitrary strategy for the game that succeeds with probability $p'_n$. The strategy consists of a state $\rho_{\reg{ABC}}$ and for each $\theta \in C_{n,n/2} = \{\gamma\in \{0,1\}^n:\, |\gamma|=\frac{n}{2}\}$  two POVMs $\{B^\theta_y\}_{y\in \{0,1\}^T}$ and  $\{C^\theta_z\}_{z\in \{0,1\}^{\ol{T}}}$ respectively. Applying Naimark's dilation theorem if needed, assume without loss of generality that both families of measurements are projective. For any $\theta\in\{0,1\}^n$ such that $|\theta|=\frac{n}{2}$ define
\[ \Pi^\theta \,=\, \sum_{x\in\{0,1\}^n} \proj{x}_\theta \otimes B^\theta_{x_T} \otimes C^\theta_{x_{\ol{T}}}\;.\]
Then $\Pi^\theta$ is a projector. Furthermore we can express the strategy's success probability as
\begin{align}
p'_n&= \Es{ \theta\in C_{n,n/2}} \Tr\big( \Pi^\theta\, \rho_{\reg{ABC}}\big)\notag\\
&\leq \Big\| \Es{\theta\in C_{n,n/2}} \Pi^\theta\Big\|\notag\\
&\leq \frac{1}{N} \sum_{k=1}^N \max_\theta \big\|\Pi^\theta \Pi^{\pi^k(\theta)}\big\|\;,\label{eq:l1-1}
\end{align}
where the first inequality follows by linearity and the definition of the operator norm and the second inequality follows from~\ref{lem:sum-bound}. In the third line we set $N={n \choose n/2}$ and $\pi^1,\ldots,\pi^N$ are the $N$ mutually orthogonal permutations promised by Lemma~\ref{lem:permutations}.

Note that at this stage we are in a situation that is very similar to the situation at Eq.~\eqref{eq:cosett-1} in the proof of Theorem~\ref{thm:coset}. The only difference is that there is a single basis $\beta$, that is the standard basis of $\F_2^n$ (i.e.\ the coordinate vectors). We make the correspondence between the two situations more explicit in Section~\ref{sec:reduction}. Here, for clarity we complete the proof without at all resorting to the notation of subspaces. 

Fix an arbitrary pair $(\theta,\theta')$ and let $R$ be the set of indices in which $\theta$ and $\theta'$ differ. Without loss of generality, assume that $\theta_R$ has Hamming weight at most $|R|/2$; if not we exchange the roles of $\theta$ and $\theta'$. Let $S = \{i\in R:\, \theta_i=0\}$, so that $S\subseteq R$ and $|S|> |R|/2$. Let 
\[ T=\{i:\, \theta_i=0\}\qquad\text{and}\qquad T'=\{i:\, \theta'_i=0\}\;,\]
so that $S\subseteq T\cap \ol{T'}$.
Let 
\[\ol{P} \,=\, \sum_{x_T\in\{0,1\}^T} \big(\proj{x_S} \otimes \Id_{\ol{S}}\big) \otimes B^\theta_{x_T} \otimes \Id_\reg{C}\;,\] 
where $\Id_{\ol{S}} $ denotes the identity on qubits of register $\reg{A}$ that do not lie in the set $S$. 
 Similarly, let 
\[\ol{Q} \,=\, \sum_{x_{\ol{T'}}\in\{0,1\}^{\ol{T'}}} \big(H^{S}\proj{x_S}H^S \otimes \Id_{\ol{S}} \big)\otimes  \Id_\reg{B} \otimes C^{\theta'}_{x_{\ol{T'}}} \;,\]
where   $H^S$ denotes a Hadamard on each of the qubits in $S$.  We  compute 
\begin{align*}
\ol{P}\,\ol{Q}\,\ol{P} &= \sum_{x_T,y_{\ol{T'}},z_T} \proj{x_S} H^S \proj{y_{S}} H^S \proj{z_S} \otimes  \Id_{\ol{S}} \otimes P^\theta_{x_T} P^\theta_{z_T} \otimes Q^{\theta'}_{y_{\ol{T'}}}\\
&= \sum_{x_T,y_{\ol{T'}}} \proj{x_S} H^S \proj{y_{S}} H^S \proj{x_S} \otimes  \Id_{\ol{S}} \otimes P^\theta_{x_T}  \otimes Q^{\theta'}_{y_{\ol{T'}}}\\
&= 2^{-|S|} \sum_{x_T} \proj{x_S}\otimes  \Id_{\ol{S}} \otimes P^\theta_{x_T}  \otimes \Id_\reg{C}\;,
\end{align*}
where for the second line we used that $P^\theta_{x_T} P^\theta_{z_T} = \delta_{x_T,z_T} P^\theta_{x_T}$ and for the third line that $|\bra{x_S} H^S \ket{y_{S}}|^2=2^{-|S|}$ for all $x,y$ and $\sum_{y_{\ol{T'}}} Q^{\theta'}_{y_{\ol{T'}}}=\Id_\reg{C}$ for all $\theta'$. Using that $\sum_{x_T} P^\theta_{x_T} =\Id$ it follows that 
\[\|\ol{P}\ol{Q}\ol{P}\|\leq 2^{-|S|}\,\leq\,2^{-|R|/2}\;,\]
where the second inequality is because $|S|\geq|R|/2$. Hence for all $(\theta,\theta')$,
\begin{align}
\big\| \Pi^\theta \Pi^{\theta'}\big\|^2 &= \big\| \Pi^{\theta'} \Pi^\theta \Pi^{\theta'}\big\|\notag\\
&\leq   \big\|\Pi^{\theta'}\, \ol{P} \,\Pi^{\theta'}\big\|\notag\\
&=  \big\|\ol{P} \,\Pi^{\theta'} \,\ol{P} \big\|\notag\\
&\leq \big\|\ol{P}\,\ol{Q}\,\ol{P}\big\|\notag\\
&\leq 2^{-|R|/2}\;,\label{eq:l1-2}
\end{align}
where in the first equality we used that $\Pi^{\theta}$ is a projection, the first inequality uses $\Pi^\theta \leq \ol{P}$ because $C^\theta_{x_{\ol{T}}}\leq \Id$ for all $x_{\ol{T}}$, the second equality uses that $\ol{P}$ and $\Pi^{\theta'}$ are projections and the last inequality that $\Pi^{\theta'}\leq \ol{Q}$.

By Lemma~\ref{lem:permutations} for any $k\in \{0,\ldots,\frac{n}{2}\}$ there are ${n/2 \choose k}^2$ permutations $\pi_j$ such that $\theta$ and $\pi_j(\theta)$ differ in $2k$ positions, i.e.\ such that $|R|=2k$. 
Returning to~\eqref{eq:l1-1} and using~\eqref{eq:l1-2} we obtain
\begin{align}
p'_N &\leq \frac{1}{N} \sum_{k=0}^{n/2} {n/2 \choose k}^2 2^{-k/2}\;.\label{eq:l1-4}
\end{align}
We conclude using Lemma~\ref{lem:binomial}. 
\end{proof}

\section{Reduction to the coset-monogamy game}
\label{sec:reduction}

In this section we show a reduction from the coset-monogamy game to the basis-monogamy game. This gives a second proof of Theorem~\ref{thm:coset}, by reduction to Theorem~\ref{thm:basis-game}. 

\begin{proposition}\label{prop:coset-game}
Let $n\geq 2$ be an even integer. Let $p_n$ be the maximum probability of winning for Bob and Charlie in the basis-monogamy game. Let $q_n$ be the maximum probability of winning for the adversary, Bob and Charlie in the coset-monogamy game. Then 
\[ q_n \,\leq\, p_n\;.\]
\end{proposition}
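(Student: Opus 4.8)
I would prove this by a direct \emph{strategy-to-strategy} reduction: from any strategy for the coset-monogamy game construct a strategy for the basis-monogamy game that wins with at least the same probability, and then conclude by taking the supremum over coset strategies. Two facts already in hand do almost all of the work. First, Lemma~\ref{lem:correspondence} lets me replace the coset game -- where the challenger hands $\ket{A_{s,s'}}$ to the adversary who then splits it into $\reg B,\reg C$ -- by the equivalent extended nonlocal game in which Bob and Charlie first prepare $\rho_{\reg{ABC}}=(\Id_{\reg A}\otimes\Phi_{\reg{A'}})(\proj{\phi^+}_{\reg{AA'}}^{\otimes n})$, hand $\reg A$ to Alice, and Alice measures $\reg A$ with $\{\proj{A_{s,s'}}\}_{(s,s')\in\textsf{CS}(A)\times\textsf{CS}(A^\perp)}$; by that lemma, a fixed coset strategy $(\Phi,\{B^A_s\},\{C^A_{s'}\})$ succeeds with probability $q_n'=\Es{A}\sum_{s,s'}\Tr\big((\proj{A_{s,s'}}\otimes B^A_s\otimes C^A_{s'})\rho\big)$, the outer average being over $A$ uniform in $\textsf{G}\parens*{\tfrac{n}{2},n}$. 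Second, the coset-state/BB'84 dictionary of Claim~\ref{claim:translation}: for $\theta\in C_{n,n/2}$, writing $A_\theta=\mathrm{span}\{e_i:\theta_i=1\}$ and $T=\{i:\theta_i=0\}$, one has $\proj{x}_\theta=\proj{(A_\theta)_{\iota_T(x_T),\,\iota_{\ol T}(x_{\ol T})}}$ where $\iota_T$ embeds $\{0,1\}^T$ into $\F_2^n$ by zero-padding; moreover every $A\in\textsf{G}\parens*{\tfrac{n}{2},n}$ is of the form $MA_\theta$ for some $M\in\mathrm{GL}_n(\F_2)$, and the basis permutation $U_M:\ket{u}\mapsto\ket{Mu}$ satisfies $U_M\ket{(A_\theta)_{s,s'}}=\ket{(MA_\theta)_{Ms,\,(M^{-1})^Ts'}}$, which reduces to the one-line identity $U_MX^sZ^{s'}U_M^\dagger=X^{Ms}Z^{(M^{-1})^Ts'}$.

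Given a coset strategy, I would construct the following basis-game strategy. Bob and Charlie share a uniformly random $M\in\mathrm{GL}_n(\F_2)$; in the preparation phase they form $\rho_{\reg{ABC}}$ as above, apply $U_M^\dagger$ to register $\reg A$, and pass $\reg A$ to Alice. After Alice announces $\theta$, Bob puts $A:=MA_\theta\in\textsf{G}\parens*{\tfrac{n}{2},n}$, measures $\reg B$ with $\{B^A_s\}_{s\in\textsf{CS}(A)}$ to get an outcome $s$, and returns the unique $y\in\{0,1\}^T$ with $[M\iota_T(y)]_A=s$, where $[v]_A\in\textsf{CS}(A)$ denotes the representative of the coset $v+A$; this is well defined because $\iota_T(\{0,1\}^T)$ is a transversal of $A_\theta$, hence $M\iota_T(\{0,1\}^T)$ a transversal of $MA_\theta=A$, so $x_T\mapsto[M\iota_T(x_T)]_A$ is a bijection $\{0,1\}^T\to\textsf{CS}(A)$. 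Symmetrically, Charlie measures $\reg C$ with $\{C^A_{s'}\}_{s'\in\textsf{CS}(A^\perp)}$, gets $s'$, and returns the unique $z\in\{0,1\}^{\ol T}$ with $[(M^{-1})^T\iota_{\ol T}(z)]_{A^\perp}=s'$, using $(MA_\theta)^\perp=(M^{-1})^TA_\theta^\perp$ and that $(M^{-1})^T\iota_{\ol T}(\{0,1\}^{\ol T})$ is a transversal of $A^\perp$.

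For the analysis I would condition on $(M,\theta)$. Alice effectively measures $U_M^\dagger\rho\,U_M$ on $\reg A$ with $\{\proj{x}_\theta\}$, and by Claim~\ref{claim:translation} $U_M\proj{x}_\theta U_M^\dagger=\proj{A_{\sigma(x),\sigma'(x)}}$ where $\sigma(x)=[M\iota_T(x_T)]_A$ and $\sigma'(x)=[(M^{-1})^T\iota_{\ol T}(x_{\ol T})]_{A^\perp}$; thus $x\mapsto(\sigma(x),\sigma'(x))$ is a bijection $\{0,1\}^n\to\textsf{CS}(A)\times\textsf{CS}(A^\perp)$, and by construction Bob's output equals $x_T$ exactly when his POVM outcome is $\sigma(x)$, and Charlie's equals $x_{\ol T}$ exactly when his outcome is $\sigma'(x)$. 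Pushing the operators through $U_M$ (which acts only on $\reg A$), the win probability conditioned on $(M,\theta)$ is $\sum_x\Tr\big((\proj{A_{\sigma(x),\sigma'(x)}}\otimes B^A_{\sigma(x)}\otimes C^A_{\sigma'(x)})\rho\big)=\sum_{(\sigma,\sigma')}\Tr\big((\proj{A_{\sigma,\sigma'}}\otimes B^A_\sigma\otimes C^A_{\sigma'})\rho\big)$, precisely the coset-game success probability for the subspace $A=MA_\theta$. Averaging over $M$ and $\theta$, and using that $MA_\theta$ is uniform over $\textsf{G}\parens*{\tfrac{n}{2},n}$ for each fixed $\theta$ (because $\mathrm{GL}_n(\F_2)$ acts transitively on $\tfrac{n}{2}$-dimensional subspaces), the overall win probability of this strategy equals $q_n'$. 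Since a strategy using shared randomness is no better than its best deterministic specialization, $p_n\ge q_n'$; taking the supremum over coset strategies gives $q_n\le p_n$ -- and, combined with Theorem~\ref{thm:basis-game}, this reproves Theorem~\ref{thm:coset}.

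There is no analytic obstacle here: the quantitative heart, the operator-norm bound obtained from Lemma~\ref{lem:sum-bound} and Lemma~\ref{lem:permutations}, has been entirely absorbed into Theorem~\ref{thm:basis-game}. The only thing requiring care is purely combinatorial bookkeeping -- tracking how the transversals of $A_\theta$ and of $A_\theta^\perp$ transport under $M$, and checking that Bob's and Charlie's outputs, defined through those transported transversals, make the basis-game winning condition coincide with the coset-game winning condition. In effect this proof-by-reduction runs parallel to the direct proof of Theorem~\ref{thm:coset}, with the random $M\in\mathrm{GL}_n(\F_2)$ here playing the role of the averaged-over basis $\beta$ of $\F_2^n$ there.
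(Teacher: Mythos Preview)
Your proof is correct and follows essentially the same approach as the paper: the paper also builds a basis-game strategy from a given coset strategy by inserting a uniformly random basis change (a random basis $\mathcal B$, i.e.\ your $M\in\mathrm{GL}_n(\F_2)$) between the EPR halves and the channel $\Phi$, then invokes Claim~\ref{claim:translation} to identify Alice's BB'84 outcome with a coset-state outcome and checks that the two winning conditions coincide. The only cosmetic differences are your matrix language versus the paper's basis language, and applying $U_M^\dagger$ to $\reg A$ after $\Phi$ rather than $U_{\mathcal B}$ to $\reg{A'}$ before $\Phi$; these are equivalent because $U_{\mathcal B}$ is a real permutation of the computational basis, so moving it across the EPR pairs yields its inverse on the other register.
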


\begin{proof}
Let $n\geq 2$ be even. Fix a strategy for the adversary that succeeds with some probability $q'_n \leq q_n$ in the coset-monogamy game. This strategy is specified by a channel $\Phi$ and families of POVM $\{B^A_s\}_{s\in \F_2^n}$ and $\{C^A_s\}_{s\in \F_2^n}$ for Bob and Charlie respectively. Here, the POVMs are indexed by subspaces $A$ and return outcomes $s\in \F_2^n$. 

We define a strategy for Bob and Charlie in the basis-monogamy game that succeeds with probability $p'_n = q'_n$. The strategy is as follows: 
\begin{enumerate}
\item Bob and Charlie prepare $n$ EPR pairs, $\rho_{\reg{AA'}} = \proj{\phi^+}^{\otimes n}$ where  $\ket{\phi^+} = \frac{1}{\sqrt{2}}(\ket{00}+\ket{11})$ and registers $\reg{A}$ and $\reg{A'}$ are $n$ qubits each, containing the $n$ first halves and the $n$ second halves of the EPR pairs respectively. They select a uniformly random basis $\mB=\{u_1,\ldots,u_n\}$ of $\F_2^n$ which they each keep a copy of. Let 
$U_\mB$ be the unitary of $(\C^2)^{\otimes n}$ which permutes standard basis vectors as
\begin{equation}\label{eq:def-ub}
\forall x \in \{0,1\}^n\;,\qquad U_\mB\ket{x}\,=\,\Big|\sum_i x_i u_i\Big\rangle\;.
\end{equation}
They apply $U_B$ to register $\reg{A'}$ and then compute
\begin{equation}\label{eq:rho-abc}
 \rho_{\reg{ABC}} \,=\,\big(\Id_{\reg{A}}\otimes\Phi_{\reg{A'}\to \reg{BC}}\big)  \big( (\Id_\reg{A}\otimes U_\mB)[\rho_{\reg{AA'}}]\big)\;,
\end{equation}
where for any linear maps $X,Y$ on $\mH$ we write $X[Y]$ for $XYX^\dagger$. 
They send register $\reg{A}$ to the challenger. Bob keeps register $\reg{B}$ and Charlie keeps register $\reg{C}$. 
\item Let $\theta\in \{0,1\}^n$ be the question selected by the challenger. Upon receipt of $\theta$, Bob and Charlie each set 
\begin{equation}\label{eq:def-A}
A = \textrm{Span}\{ u_i : \theta_i = 1\}
\end{equation}
and $T = \{i: \theta_i=0\}.$
Bob measures the qubits in $\reg{B}$ using $\{B^A_s\}$ to obtain an outcome $s_1$. Charlie  measures the qubits in $\reg{C}$ using $\{C^A_s\}$ to obtain an outcome $s_2$. Let $s_1 = \sum_i y_i u_i$ and $s_2 = \sum_i z_i u_i$, where $y,z\in \{0,1\}^n$, be the unique decomposition of each vector in the basis $\{u_i\}$ of $\F_2^n$.  Bob returns $y_T$ and Charlie returns $z_{\ol{T}}$. 
\end{enumerate}
We introduce notation to express the winning probability of this strategy in the basis-monogamy game. For a basis $\mB = \{u_1,\ldots,u_n\}$ of $\F_2^n$ and $\theta\in\{0,1\}^n$, $T=\{i:\theta_i=0\}$ and $y\in \{0,1\}^T,z\in\{0,1\}^{\ol{T}}$ we let $A$ be the space spanned by $\{u_i: \theta_i=1\}$ and
\begin{equation}\label{eq:def-BC}
 B^{(\mB,\theta)}_y \,=\, \sum_{y':\,y'\cdot u_i=y_i \forall i\in T} B^A_{y'}\;,\qquad C^{(\mB,\theta)}_z \,=\, \sum_{z':\,z'\cdot u_i=z_i \forall i\in \ol{T}} C^A_{z'}\;.
\end{equation}
Note that here $T$ is determined by $\theta$, and for all $(\mB,\theta)$, both $\{B^{(\mB,\theta)}_y\}_{y\in\{0,1\}^T}$ and $\{C^{(\mB,\theta)}_z\}_{z\in\{0,1\}^{\ol{T}}}$ is a POVM. With this notation we can write
\[ p'_n \,=\, \Es{\theta \in \{0,1\}^n }\Es{\mB} \sum_{x\in \{0,1\}^n} \Tr\Big( \big(\proj{x}_\theta \otimes B^{(\mB,\theta)}_{x_T} \otimes C^{(\mB,\theta)}_{x_{\ol{T}}}\big)\rho_{\reg{ABC}}\Big) \;,\]
where $\rho_{\reg{ABC}}$ is the state defined in~\eqref{eq:rho-abc} and the expectation is over a uniformly random $\theta\in\{0,1\}^n$ (as chosen by the challenger) and basis $\mB=\{u_1,\ldots,u_n\}$ for $\F_2^n$ (as chosen by Bob and Charlie). 
Using Claim~\ref{claim:translation}, 
\[(\proj{x}_\theta \otimes U_\mB) \ket{\phi^+}^{\otimes n}\,=\,\frac{1}{\sqrt{2^n}} \ket{x}_\theta \otimes U_\mB \ket{x}_\theta \,=\, \frac{1}{\sqrt{2^n}}\ket{x}_\theta \otimes \ket{A_{s,s'}}\;,\]
where $A$ is defined from $x,\theta$ and $\mB$ as in~\eqref{eq:def-A}, $s = \sum_{i\in {T}} x_i u_i$ and $s' = \sum_{i\in \ol{T}} x_i u_i$.
Thus 
\begin{align}
 p'_n &= \Es{\theta \in \{0,1\}^n }\Es{\mB} \sum_{x\in \{0,1\}^n} \bra{A_{s,s'}} B^{(\mB,\theta)}_{x_T} \otimes C^{(\mB,\theta)}_{x_{\ol{T}}}\ket{A_{s,s'}}\notag\\
&= \Es{A}\Es{(s,s') A^\perp\times A}  \sum_{u\in A,\, v\in A^\perp} \bra{A_{s,s'}} B^{A}_{u} \otimes C^{A}_{v}\ket{A_{s,s'}}\notag\\
&= \Es{A}\Es{(s,s')\in\{0,1\}^n\times \{0,1\}^n}  \sum_{u\in A,\, v\in A^\perp} \bra{A_{s,s'}} B^{A}_{u} \otimes C^{A}_{v}\ket{A_{s,s'}}\;,\label{eq:qn}
\end{align}
where the second equality is by definition of $B^{\mB,\theta}_{x_T}$ and $C^{(\mB,\theta)}_{x_{\ol{T}}}$ in~\eqref{eq:def-BC} and the expectation is over a uniformly random subspace $A\subseteq \F_2^n$ of dimension $\frac{n}{2}$. Here we used that choosing such an $A$ uniformly at random and returning $(A,A^\perp)$ yields the same distribution as choosing a basis $\mB=\{u_1,\ldots,u_n\}$ and $\theta\in\{0,1\}^n$ such that $|\theta|=\frac{n}{2}$ uniformly at random and returning $(\textrm{Span}\{u_i:\,\theta_i=1\} ,\textrm{Span}\{u_i:\,\theta_i=0\})$.  
In the second line above, the expectation over $s,s'$ is uniform over $s\in A^\perp$ and $s'\in A$, and in the third line it is uniform over $s,s'\in\{0,1\}^n$; equality between the second and third lines follows from the definition of $\ket{A_{s,s'}}$. The expression in~\eqref{eq:qn} is precisely $q'_n$, hence we have shown that $p'_n = q'_n$. Taking the supremum over all strategies in the coset-monogamy game proves the lemma. 
\end{proof}

The following claim is used in the proof of Proposition~\ref{prop:coset-game}. 

\begin{claim}\label{claim:translation}
Let $\{u_1,\ldots,u_n\}$ be a basis of $\F_2^n$ and $U_\mB$ defined as in~\eqref{eq:def-ub}. Let $\{u^1,\ldots,u^n\}$ be its dual basis, \emph{i.e.} $u^i\cdot u_j=\delta_{i,j}$. Let $T \subseteq \{1,\ldots,n\}$ be such that $|T|=n/2$ and $A = \textrm{Span}\{u_i:\, i \in \ol{T}\}$. Let $\theta \in \{0,1\}^n$ be the indicator of $\ol{T}$ and $x\in \{0,1\}^n$. Let $s=\sum_{i\in {T}} x_i u_i$ and $s'=\sum_{i\in \ol{T}} x_i u^i$. Then
\[ \ket{A_{s,s'}} \,=\, U_\mB \ket{x}_\theta\;.\]
\end{claim}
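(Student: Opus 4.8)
The plan is to compute $U_\mB\ket{x}_\theta$ directly by expanding $\ket{x}_\theta$ in the standard basis and applying the definition~\eqref{eq:def-ub} of $U_\mB$. First I would write $\ket{x}_\theta = \ket{x_T}_{T}\otimes(H\ket{x})_{\ol T}$, where the qubits in $T$ (those with $\theta_i=0$) carry the standard-basis state $\ket{x_i}$ and the qubits in $\ol T$ (those with $\theta_i=1$) carry $H\ket{x_i} = \tfrac{1}{\sqrt2}\sum_{w_i\in\{0,1\}}(-1)^{x_i w_i}\ket{w_i}$. Collecting terms, this gives
\[
\ket{x}_\theta \,=\, \frac{1}{\sqrt{2^{n/2}}}\sum_{w\in\{0,1\}^n:\, w_T = x_T}(-1)^{\sum_{i\in\ol T} x_i w_i}\ket{w}\;,
\]
so that applying $U_\mB$ termwise turns each $\ket{w}$ into $\big|\sum_i w_i u_i\big\rangle$. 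The main computational step is then to reindex the sum over $w$ (with $w_T = x_T$ fixed and $w_{\ol T}$ free) in terms of the vector $a = \sum_{i\in\ol T} w_i u_i \in A$, using that $\{u_i: i\in\ol T\}$ is a basis of $A$, so this correspondence is a bijection between $\{0,1\}^{\ol T}$ and $A$. Under this substitution $\sum_i w_i u_i = a + s$ where $s = \sum_{i\in T} x_i u_i$.

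The phase is where the dual basis enters: I need to check that $\sum_{i\in\ol T} x_i w_i = a\cdot s'$ with $s' = \sum_{i\in\ol T} x_i u^i$. Indeed, $a\cdot s' = \big(\sum_{j\in\ol T} w_j u_j\big)\cdot\big(\sum_{i\in\ol T} x_i u^i\big) = \sum_{i,j\in\ol T} w_j x_i (u_j\cdot u^i) = \sum_{i\in\ol T} w_i x_i$ by the duality relation $u_j\cdot u^i = \delta_{ij}$. (Here one should note the $u_i$ are being treated simultaneously as elements of $\F_2^n$ — so $U_\mB$ makes sense — and as vectors against which $u^i$ pairs; since the claim's $A$ is $\mathrm{Span}\{u_i:i\in\ol T\}$ this is consistent, and one just has to be careful that the $s,s'$ here match the conventions of the lemma, possibly up to the harmless fact that $s'$ is only defined modulo $A^\perp$.) Combining, $U_\mB\ket{x}_\theta = \tfrac{1}{\sqrt{|A|}}\sum_{a\in A}(-1)^{a\cdot s'}\ket{a+s} = \ket{A_{s,s'}}$, which is exactly the definition.

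I expect no serious obstacle here — the claim is essentially a bookkeeping identity — but the step requiring the most care is getting the phase right, i.e.\ verifying that the Hadamard phases $(-1)^{x_i w_i}$ on the $\ol T$-qubits reassemble, after the change of variables $w_{\ol T}\mapsto a$, into the single inner product $(-1)^{a\cdot s'}$ appearing in $\ket{A_{s,s'}}$; this is precisely the point at which the dual basis $\{u^i\}$ is needed rather than $\{u_i\}$ itself (when $\{u_i\}$ is not self-dual). A secondary point to state cleanly is that the reindexing over $\{0,1\}^{\ol T}$ is a bijection onto $A$, which is immediate since $\{u_i:i\in\ol T\}$ is linearly independent and spans $A$, so that $|A| = 2^{n/2} = 2^{|\ol T|}$ and normalization matches.
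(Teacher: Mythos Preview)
Your proof is correct and complete; the bookkeeping with the phase and the bijection $\{0,1\}^{\ol T}\leftrightarrow A$ is done carefully and the use of the dual basis is exactly right.

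It is, however, a genuinely different route from the paper's. The paper argues structurally: it writes $\ket{x}_\theta = X^{x_T} Z^{x_{\ol T}}\ket{0}_\theta$, then establishes the conjugation identities $U_\mB X^{x} U_\mB^\dagger = X^{\sum_i x_i u_i}$ and $U_\mB Z^{x'} U_\mB^\dagger = Z^{\sum_i x'_i u^i}$, and finally uses $U_\mB\ket{0}_\theta = \ket{A}$ to conclude. In other words, the paper shows that $U_\mB$ intertwines the standard Pauli one-time pad with the ``basis-adapted'' one, and the claim follows by pushing $U_\mB$ past the Paulis. Your approach instead expands $\ket{x}_\theta$ in the computational basis and applies $U_\mB$ termwise, reducing the claim to a single explicit sum. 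Your argument is more elementary and self-contained for this specific identity; the paper's approach isolates a reusable fact (how $U_\mB$ acts on Paulis, and in particular why the dual basis appears for $Z$) that explains conceptually where $s'=\sum_{i\in\ol T} x_i u^i$ comes from and would transfer more readily to related computations.
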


\begin{proof}
First observe that 
\begin{equation}\label{eq:tr-1}
\ket{x}_\theta = X^{x_{{T}}} \ket{x_{\ol{T}}}_\theta = X^{x_{{T}}}Z^{x_{\ol{T}}} \ket{0}_\theta\;.
\end{equation}
Next we verify that for any $x,x'\in\{0,1\}^n$, 
\begin{equation}\label{eq:tr-2}
 U_\mB X^x U_\mB^\dagger \,=\, X^{t} \qquad\text{and}\qquad U_\mB Z^{x'} U_\mB^\dagger \,=\, Z^{t' }\;,
\end{equation}
where $t = \sum_i x_i u_i$ and $t'=\sum_i x'_i u^i$. This completes the proof of the claim as
\begin{align*}
 U_\mB \ket{x}_\theta &= U_\mB X^{x_{{T}}} Z^{x_{\ol{T}}} \ket{0}_\theta\\
&= X^s Z^{s'} U_\mB \ket{0}_\theta\\
&= X^s Z^{s'} \ket{A}\\
&= \ket{A_{s,s'}}\;,
\end{align*} 
where the first line is by~\eqref{eq:tr-1}, the second by~\eqref{eq:tr-2}, the third by definition of $\ket{A}$, $U_\mB$, and $\ket{0}_\theta = \sum_{b\in \{0,1\}^{\ol{T}}} \ket{\sum_i b_i e_i}$, and the last is by definition of $\ket{A_{s,s'}}$. 

It remains to show~\eqref{eq:tr-1}. 
 We show the first relation, the second is analogous. Writing $X^x = \sum_y \ket{x+y}\bra{x}$ and using the definition of $U_\mB$ we get 
\[ U_\mB X^x U_\mB^\dagger \,=\, \sum_y \ket{y'+x}\!\bra{y'}\;,\]
where we defined $y' = \sum_i y_i u_i$ and used linearity. The right-hand side is precisely $X^x$.
\end{proof}

\bibliographystyle{alphaarxiv.bst}
\bibliography{monogamy}

\end{document}